\documentclass[lettersize,journal]{IEEEtran}
\usepackage{amsmath,amsfonts,amssymb,amssymb}
\usepackage{algorithmicx}
\usepackage{algpseudocode}
\usepackage{threeparttable}
\usepackage{epsfig}
\usepackage{algorithm}
\usepackage{array}
\usepackage{amsthm}
\usepackage{amsmath}  
\usepackage[caption=false,font=footnotesize]{subfig}
\usepackage{textcomp}
\usepackage{stfloats}
\usepackage{url}
\usepackage{verbatim}
\usepackage{graphicx}
\usepackage{cite}
\usepackage{tablefootnote}
\usepackage{titlesec}
\usepackage{xcolor}

\usepackage{subcaption}

\titlespacing*{\section}{0pt}{\dimexpr10.5pt-3pt}{\dimexpr8pt-5pt}  
\titlespacing{\subsection}{0pt}{2pt}{3pt}

\newtheorem{corollary}{Corollary}

\newtheorem{mylemma}{Lemma}
\graphicspath{{pic/},{pics/}}

\allowdisplaybreaks
    \title{ Hierarchical Codebook Design and Low-Overhead Beam Training for Near-Field Communications With Uniform Circular Arrays}

\author{Gen~Luo,~Hang~Yuan, Xiaozheng~Gao,~\IEEEmembership{Member,~IEEE}, \\Minwei~Shi,~Chong~Han,~\IEEEmembership{Senior~Member,~IEEE}, Kai~Yang,~\IEEEmembership{Member,~IEEE}
	
	\thanks{G.~Luo, H.~Yuan, X.~Gao, and K.~Yang are with the School of Information and Electronics, Beijing Institute of Technology, Beijing 100081,
		China~(email: gen.luo@bit.edu.cn, yuanhang@bit.edu.cn, gaoxiaozheng@bit.edu.cn, yangkai@ieee.org).} 
	\thanks{M.~Shi is with the School of Cyberspace Science and Technology, Beijing Institute of Technology, Beijing 100081, China~(e-mail: shiminwei@bit.edu.cn).}
	\thanks{C.~Han is with the Terahertz Wireless Communications (TWC) Laboratory, Shanghai Jiao Tong University, Shanghai 200240, China~(e-mail: chong.han@sjtu.edu.cn).}

}
\begin{document}

\maketitle

\begin{abstract}


Extremely large-scale multiple-input multiple-output (XL-MIMO) enables near-field location-specific beam focusing for sixth-generation (6G) communications. 
Uniform circular arrays (UCAs), with rotational symmetry and uniform azimuth coverage, have emerged as a key enabling architecture for near-field XL-MIMO systems.
In this paper, we propose a resolution-aware hierarchical codebook for near-field UCA systems, along with an efficient two-stage beam training scheme to significantly reduce the training overhead.
Specifically, we characterize the minimum resolvable distance of UCA systems in the near-field region based on a geometric spherical-wave propagation model, revealing their spatial resolution capability in the joint angle--distance domain. 
Guided by this result, we design a UCA-specific hierarchical codebook, where a power-efficient distance-robust beamforming (DRBF) codebook provides coarse azimuth localization and a full-precision (FP) codebook sampled according to the minimum resolvable distance enables refined angle--distance beam search.
The regularized modal compensation suppresses weak-mode amplification and provides a controllable tradeoff between absolute amplitude gain under unit-norm transmission and distance robustness.
Based on this codebook, we develop a hierarchical decoupled-architecture Bayesian regression (HDA-BAR) scheme for fast and accurate near-field beam training. 
For the considered array configuration, the resulting HDA-BAR training procedure requires $384$ probing slots, corresponding to an approximately \(99.66\%\) overhead reduction relative to the conventional near-field exhaustive-search benchmark.

\end{abstract}

\begin{IEEEkeywords}
Near-field communications, uniform circular array (UCA), beam training,
codebook design.
\end{IEEEkeywords}

\section{Introduction}

Emerging sixth-generation (6G) applications, including the Internet of Everything (IoE), motivate extremely large-scale multiple-input multiple-output (XL-MIMO) as a key technology for improving spectral efficiency through highly directional transmission~\cite{Dong2025,Huang2025EnergyEfficient,Ray2021,Wang2023}.
In millimeter-wave and terahertz XL-MIMO systems, the enlarged array aperture and shortened wavelength significantly extend the near-field region~\cite{Lu2023,Hou2025,Sun2025,Mingyao2023,Shen2026FullyDynamic}. 
Consequently, conventional far-field plane-wave models and discrete Fourier transform (DFT)-based angular codebooks become inadequate~\cite{Lee2016}.
Instead, spherical-wave channel models are required, under which near-field beamforming can jointly exploit angular and distance information to realize location-specific energy focusing~\cite{Zhang2024,Song2026NearFieldTHzCovert}, thereby providing additional distance-domain resolution and spatial multiplexing gains~\cite{Wei2022,Zhang2022,Zhao2026AIEmpowered}. Thus, near-field-tailored codebook design and efficient beam training are essential for fully exploiting the potential of XL-MIMO systems.

\subsection{Prior Works}
In near-field channels, the array response is jointly determined by angle and distance, leading to inherent angle--distance coupling.
For uniform linear array (ULA)-based near-field beam training, the polar-domain codebook discretizes the channel space in the joint angular--range domain, with each beamforming vector associated with a prescribed angular--distance sample. In this construction, the angular dimension is uniformly quantized, whereas the distance samples are arranged in a non-uniform manner to match the near-field focusing characteristics~\cite{Cui2022ULA}.
However, due to the additional distance dimension, exhaustive beam training based on such codebooks incurs prohibitively high overhead, which scales with the product of the angular and distance sampling sizes~\cite{CodebookWei2022}.
To alleviate this burden, two-phase near-field beam training schemes have been introduced, where the angular direction is first identified and the range is then refined~\cite{TrainingZhang2022}.
Coarse-to-fine beam training schemes have also been investigated, where low-resolution beams are first used for coarse localization and high-resolution beams are subsequently employed for beam refinement~\cite{Qi2022,Lu2024,Yuanwei2024,Shi2024}.
More recent efforts have explored inference-based near-field beam training, where the correlation among near-field codewords is leveraged to identify the optimal beam with only a few beam probes~\cite{Zhuo2025}.



Although these approaches effectively reduce the training overhead of near-field beam search, most of them are developed for ULA-based systems. 
However, ULAs suffer from angle-dependent effective aperture shrinkage in the near-field region, especially for users located at large incidence or departure angles, which leads to non-uniform near-field coverage~\cite{CuiDai2024}.
In contrast, uniform circular arrays (UCAs), owing to their geometrically rotationally symmetric structure, are expected to provide more uniform array responses across different azimuth directions~\cite{Xie2024UCA,Guo2024UCA}.
From an implementation perspective, this rotational symmetry also facilitates uniform sectorization, full-azimuth processing, and codebook reuse under azimuth rotation.
Thus, UCAs have attracted considerable attention in near-field XL-MIMO systems. 
In particular, it was shown in~\cite{Wu2024UCA} that, benefiting from rotational symmetry, UCAs can provide a more uniform and broader near-field region over all angular directions compared to ULAs, allowing more users to benefit from near-field communications. 
For near-field UCA systems, a concentric-ring codebook was further proposed in~\cite{Wu2024UCA}, where the codewords are sampled in both the angular and distance domains to characterize the spatial focusing property of near-field channels. 
However, such joint discretization in the angular and distance domains also increases the codebook size rapidly, resulting in high beam training overhead. 

Phase-mode excitation has been exploited in UCA beamforming design to realize low-complexity beamspace processing~\cite{Zhang2019FIBF}. 
Furthermore, a low-dimensional UCA codebook has been developed to reduce the complexity of near-field beam training~\cite{Ding2023UCA}.
However, although this low-dimensional UCA codebook can reduce the training overhead, its beamforming design reduces the number of effective spatial modes, thereby increasing the beamwidth. This may weaken the near-field spatial focusing capability and spatial multiplexing potential of UCAs, making it difficult to fully exploit the additional gains brought by near-field propagation.

Although several studies have investigated near-field beamforming and codebook design for UCAs, the fundamental spatial resolution capability of UCAs, i.e., the ability to resolve users with different azimuth angles and propagation distances in the near-field region, remains insufficiently characterized.
This capability directly reflects the level of spatial focusing, which is crucial for resolution-aware codebook design, as it provides a quantitative basis for codebook spatial sampling.
Motivated by this, the design of UCA-specific near-field codebooks that fully exploit the joint angle--distance spatial resolution capability of UCAs and the development of corresponding low-overhead beam training methods deserve further investigation.

\subsection{Our Contributions}



Motivated by these challenges, we first investigate the near-field spatial resolution of UCAs over the joint angle--distance domain and then construct a hierarchical codebook tailored to the resulting resolution characteristics.
Based on this codebook, we further design an efficient Bayesian regression (BAR)-based hierarchical beam training scheme. 
The main contributions of this paper are summarized as follows.

\begin{enumerate}
	\item 
	We characterize the spatial resolution of near-field UCA systems in the joint angle--distance domain. 
	Starting from a geometric spherical-wave propagation model, we derive an approximate closed-form expression for the minimum resolvable distance. 
	This result reveals the local angle--distance separability of UCAs and provides a theoretical basis for near-field codebook sampling.
	
	\item 

	We design a hierarchical codebook specifically for near-field UCA systems. 
	In Layer-1, building upon the phase-mode representation of the UCA response, we propose a power-efficient DRBF codebook that suppresses weak-mode amplification through regularized modal compensation and maximizes the worst-case absolute amplitude gain over each coarse angle--distance sector subject to a distance-ripple constraint. The retained phase-mode order sets the nominal angular bandwidth, while exact-manifold verification determines the final sectorization. 
	In Layer-2, a full-precision (FP) codebook is constructed by sampling the angular and radial domains according to the derived minimum resolvable distance, enabling high-resolution beam refinement. 
	This hierarchical design can reduce the search space while preserving high-resolution near-field focusing capability.
	
	\item 
	We develop a hierarchical decoupled-architecture Bayesian regression (HDA-BAR) beam training scheme based on the proposed hierarchical codebook. 
	In Stage-1, the DRBF codebook is used for coarse azimuth localization. 
	In Stage-2, BAR-based fine search is performed over the pruned full-precision codebook, where a proposed diagonal angular--range coupling (DAC) kernel exploits the structured correlation among near-field codewords for efficient optimal beam identification. This scheme enables accurate near-field beam identification with substantially reduced training overhead.
	\item
	 Numerical simulations are conducted to validate the accuracy of our analysis and to evaluate the effectiveness of the proposed codebook design and beam training scheme. 
	Simulation results show that the achievable rate of the proposed HDA-BAR scheme approaches that of exhaustive search as the reference SNR increases. By exploiting hierarchical candidate pruning and correlation-aware BAR search, the scheme reduces the total beam-training overhead by \(99.66\%\) relative to exhaustive search under the considered configuration.
\end{enumerate}


\textbf{Organization:} 
The rest of this paper is structured as follows.
Section II introduces the system model. 
Section III analyzes the near-field spatial resolution of the UCA and derives the minimum resolvable distance. 
Section IV presents the proposed two-layer hierarchical codebook design. 
Section V develops the proposed HDA-BAR beam training scheme. 
Simulation results are provided in Section VI, followed by conclusions in Section VII.

\textbf{Notation:} 
Boldface lowercase and uppercase letters denote vectors and matrices, respectively. 
\((\cdot)^T\), \((\cdot)^H\), and \((\cdot)^*\) represent the transpose, Hermitian transpose, and complex conjugate, respectively. 
\(\|\cdot\|_2\) denotes the Euclidean norm, and \(\mathbb{E}[\cdot]\) denotes expectation. 
$\mathcal{CN}(0,\sigma^2)$ denotes the zero-mean circularly symmetric complex Gaussian distribution with variance $\sigma^2$.
The imaginary unit is denoted by $j$, and $\mathbf I_p$ denotes the $p\times p$ identity matrix.

\section{System Model}


\label{sec:System Model}
\begin{figure}
	\centering
	\includegraphics[width=0.9\linewidth]{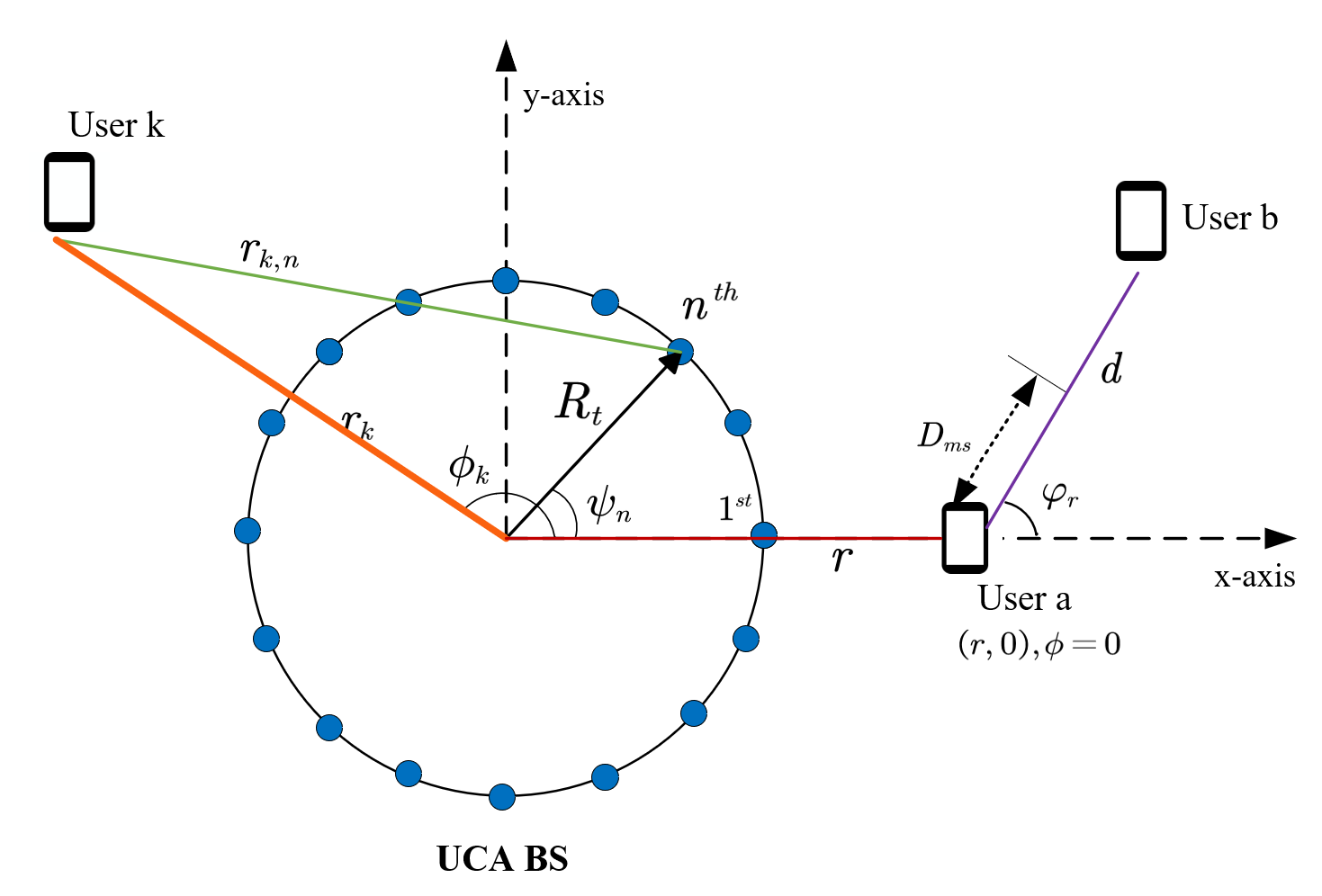}
	\caption{\footnotesize Geometric relationship between the UCA and the near-field user}
	\label{fig:picmode7}
\end{figure}

We consider a narrowband downlink millimeter-wave (mmWave) communication system.
We adopt a narrowband model to focus on spatial resolution and codebook design; wideband extensions accounting for frequency-dependent array responses and beam squint are left for future work.
As illustrated in Fig.~\ref{fig:picmode7}, a base station (BS) equipped with a UCA consisting of \(N\) antennas communicates with \(K\) single-antenna users. 
For ease of exposition, all users are assumed to be located on the same 2D plane as the UCA.

Let \(\mathcal{K}=\{1,2,\ldots,K\}\) denote the user set. 
The location of User~\(k\), \(k\in\mathcal{K}\), is specified by the polar coordinate pair \((r_k,\phi_k)\), where \(\phi_k\) denotes the azimuth angle with respect to the array center and \(r_k\) represents the distance from User~\(k\) to the array center. 
The BS employs a UCA with radius \(R_t\), where the antenna elements are uniformly distributed along the circular aperture. 
Specifically, the \(n\)-th antenna element is located at \((R_t,\psi_n)\), where \(\psi_n=\frac{2\pi(n-1)}{N}\), \(n=1,2,\ldots,N\).
Accordingly, the received signal at User~\(k\) is expressed as
\begin{align}
	y_k
	&=
	\mathbf{h}_k^H\mathbf{w}_k s_k
	+
	\sum_{\substack{v=1, v\neq k}}^{K}
	\mathbf{h}_k^H\mathbf{w}_v s_v
	+
	n_k,
	\label{received_signal_multiuser}
\end{align}

where $\mathbf{h}_k\in\mathbb{C}^{N\times1}$ denotes the downlink channel vector between the BS and User~$k$, $\mathbf{w}_k\in\mathbb{C}^{N\times1}$ denotes the transmit beamforming vector designed for User~$k$, $s_k\in\mathbb{C}$ is the data symbol intended for User~$k$, and $n_k\sim\mathcal{CN}(0,\sigma^2)$ represents the AWGN at User~$k$.
Every training codeword is normalized once to satisfy $\|\mathbf{w}_k\|_2=1$. Following the amplitude-gain convention adopted in phase-mode UCA beamforming, we define
\begin{equation}
	g_{\mathrm{abs}}(\phi,r;\mathbf{w})
	\triangleq
	\left|\mathbf{a}^{\mathrm H}(\phi,r)\mathbf{w}\right|,
	\label{eq:absolute_amplitude_gain}
\end{equation}
where $0\leq g_{\mathrm{abs}}\leq1$. Unless otherwise stated, beamforming gain refers to this amplitude quantity and is reported as $20\log_{10}g_{\mathrm{abs}}$ in decibels. No distance-dependent or pointwise normalization is applied. Since $\mathbf{h}_k=\sqrt{N}\rho_k\mathbf{a}(\phi_k,r_k)$, the corresponding coherent array-power gain in the link budget is $Ng_{\mathrm{abs}}^2$. We assume an active fully digital array, or an equivalent architecture with per-element complex-gain control, so that nonconstant-modulus codewords can be implemented.
The term \(\mathbf{h}_k^H\mathbf{w}_k s_k\) represents the desired signal received by User~\(k\), while the second term in~\eqref{received_signal_multiuser} denotes the multi-user interference caused by the signals intended for the other users.
This interference becomes more pronounced when users are closely located and their channel orthogonality is weak. 
In contrast, when users are sufficiently separated in the spatial domain, the inter-user channel correlation becomes weak, and the interference impact is relatively small.

Since mmWave links are often dominated by the LoS component because of the high path loss and limited diffraction and penetration effects~\cite{Rappaport2019}, we focus on a LoS-dominant channel model to isolate the array-dependent spatial-resolution behavior. The resulting resolution analysis characterizes the dominant LoS component; additional multipath components introduce further angle--distance components and are beyond the scope of this work. Accordingly, the channel between the BS and User~\(k\) is expressed as
\begin{equation}
	\mathbf{h}_k
	=
	\sqrt{N}\rho_k\mathbf{a}(\phi_k,r_k),
\end{equation}
where \(\rho_k\) denotes the complex path gain, which is modeled based on the Friis transmission formula as
\begin{equation}
	\rho_k
	=
	\frac{\lambda}{4\pi r_k}
	\sqrt{G_tG_r}\,
	e^{-j\frac{2\pi}{\lambda}r_k},
\end{equation}
where \(G_t\) and \(G_r\) denote the transmit and receive antenna gains, respectively. 
The term \(\mathbf{a}(\phi_k,r_k)\) represents the near-field array response vector, which is expressed as
\begin{equation}
	\mathbf{a}(\phi_k,r_k)
	=
	\frac{1}{\sqrt{N}}
	\left[
	e^{-j\frac{2\pi}{\lambda}(r_{k,1}-r_k)},
	\ldots,
	e^{-j\frac{2\pi}{\lambda}(r_{k,N}-r_k)}
	\right]^T,
\end{equation}
where \(r_{k,n}\) denotes the propagation distance between User~\(k\) and the \(n\)-th antenna element of the UCA.

Due to the spherical wavefront in the near-field region, the distances from different antenna elements to the same user are generally distinct. 
Based on the geometric relationship between User~\(k\) and the array, \(r_{k,n}\) is expressed as
\begin{align}
	\label{appromix4}
	r_{k,n}
	&=
	\sqrt{
		r_k^2+R_t^2-2r_kR_t\cos(\phi_k-\psi_n)
	}
	\notag\\
	&=
	r_k
	\sqrt{
		1+
		\frac{
			R_t^2-2r_kR_t\cos(\phi_k-\psi_n)
		}{
			r_k^2
		}
	}
	\notag\\
	&\overset{(a)}{\approx}
	r_k
	-
	R_t\cos(\phi_k-\psi_n)
	+
	\frac{R_t^2}{2r_k}
	\sin^2(\phi_k-\psi_n),
\end{align}
where step \((a)\) is obtained by using the Taylor expansion 
\(\sqrt{1+x}=1+\frac{x}{2}-\frac{x^2}{8}+\mathcal{O}(x^3)\) 
 with $x=\frac{R_t^2-2r_kR_t\cos(\phi_k-\psi_n)}{r_k^2}.$
This approximation is accurate when \(R_t\ll r_k\), which is typically satisfied in the radiative near-field region of 6G millimeter-wave XL-MIMO systems, where the Rayleigh distance may extend to tens of meters while the array radius remains much smaller than the user distance.
In particular, the user distances considered in this paper satisfy \(R_t\ll r_k\leq R_{\mathrm{Ray}}\), where \(R_{\mathrm{Ray}}\) denotes the Rayleigh distance used to distinguish the near-field region from the far-field region.
For a UCA with aperture diameter \(D=2R_t\), the Rayleigh distance is given by \(R_{\mathrm{Ray}}=\frac{2D^2}{\lambda}=\frac{8R_t^2}{\lambda}\).
In the near-field region, wave propagation exhibits spherical characteristics, rendering the conventional far-field assumption underlying angle-based spatial multiplexing no longer valid. 
Accordingly, the multi-user near-field channel is jointly characterized by the angular directions and propagation distances of users~\cite{Bacci2023}. 
This enables the antenna array to distinguish users in both angular and distance domains, even for users with similar angular directions~\cite{Ruiz2024}.
To quantify this enhanced separability, we define the spatial resolution as the ability of the array to distinguish users in the joint angle--distance domain. 
To facilitate subsequent codebook design, we first investigate the spatial resolution of near-field UCA systems, which characterizes their spatial multiplexing capability and multi-user access potential.

\section{Near-Field Spatial Resolution Analysis of UCA}

To characterize the spatial resolution of the considered multi-user near-field UCA system, we analyze the pairwise spatial separability between two representative users selected from the user set \(\mathcal{K}\). 
Specifically, two users, denoted by User~\(a\) and User~\(b\), are considered, where \(a,b\in\mathcal{K}\) and \(a\neq b\). 
From the channel perspective, this separability depends on whether the near-field channel vectors of closely located users can become sufficiently orthogonal in the spatial domain. 
Thus, User~\(b\) is assumed to be located in the vicinity of User~\(a\), and the spatial resolution is quantified by the minimum separation required for their channel vectors to achieve a prescribed level of orthogonality.

User~\(a\) is taken as the reference user. Since the UCA is rotationally symmetric, the analysis is invariant to the absolute azimuth of this reference user.
For convenience, we place User~\(a\) on the \(x\)-axis for analysis, as illustrated in Fig.~\ref{fig:picmode7}. 
Its distance from the array center is denoted by \(r_a=r\), and its azimuth angle is set to \(\phi=0\). The distance between User~\(a\) and the \(n\)-th UCA element can be approximated as
\begin{equation}
	\label{r1n_appromix}
	\ r_{a,n} {\approx}r-R_t\cos(\phi-\psi_n)+\frac{R^2_t}{2r}(1-\cos^2(\phi-\psi_n)).
\end{equation}

Based on this setup, we introduce another User~\(b\) in the vicinity of User~\(a\). 
Let \(d\) denote the distance between User~\(a\) and User~\(b\), and let \(\varphi_r\) denote the angle between the displacement direction from User~\(a\) to User~\(b\) and the radial direction of User~\(a\). 
Since User~\(a\) is placed on the \(x\)-axis, this radial direction coincides with the \(x\)-axis.
Let \(r_{b,n}\) denote the distance from User~\(b\) to the \(n\)-th UCA element. It can be approximated as follows.

\begin{align}
	\label{r2n_appromix}
	r_{b,n} 
	& \approx r_{a,n} + \frac{\partial r_{a,n}}{\partial r} \Delta r 
	+ \frac{\partial r_{a,n}}{\partial \phi} \Delta \phi \notag\\
	&\approx r_{a,n}+\left[1-\frac{R_t^2}{2r^2}\sin^2\psi_n
	\right]\Delta r\notag\\&\quad-\left[R_t\sin\psi_n+\frac{R_t^2}{r}\sin\psi_n\cos\psi_n
	\right]\Delta\phi,
\end{align}
where $\Delta r = d\cos\varphi_r$ and $\Delta \phi = \frac{d\sin\varphi_r}{r}$ represent the corresponding radial and angular displacements between the two users.
The first approximation follows from a first-order Taylor expansion around the reference location of User~$a$ $(r,\phi)$, while the second approximation is obtained by substituting $r_{a,n}$ with its expression in~\eqref{r1n_appromix}.

After deriving the propagation distances for User~$a$ and User~$b$, the corresponding channel matrix can be constructed. To quantify the spatial separability of two users in the near-field angle--distance domain, effective degrees of freedom (EDoF) is adopted as a key performance metric. Since the EDoF characterizes the number of effective parallel spatial channels~\cite{Roy2002,Song2026NearField}, the minimum resolvable distance between two users can be identified by examining the variation of the EDoF with the inter-user spacing.
Specifically, the multiuser channel from two users to the UCA is represented by $\mathbf{H} \in \mathbb{C}^{2 \times N}$. Under the assumption that the path gains across different antenna elements are approximately identical~\cite{Liu2023}, the channel variation is mainly governed by the phase differences induced by the propagation distances. Accordingly, $\mathbf{H}$ can be expressed as
\begin{equation}
	\mathbf{H} =
	\begin{bmatrix}
		e^{-j \frac{2\pi}{\lambda} r_{a,1}} & \cdots & e^{-j \frac{2\pi}{\lambda} r_{a,N}} \\
		e^{-j \frac{2\pi}{\lambda} r_{b,1}} & \cdots & e^{-j \frac{2\pi}{\lambda} r_{b,N}}
	\end{bmatrix}.
	\label{H_matrix}
\end{equation}

The gain matrix is denoted by $\mathbf{G}$. This matrix captures the energy coupling between the two users and antenna elements. Based on the singular value decomposition of \(\mathbf{H}\), \(\mathbf{G}\) can be decomposed as
\begin{align}
	\mathbf{G} &= \mathbf{H}\mathbf{H}^H \notag \\
	&= \mathbf{U} \boldsymbol{\Sigma} \mathbf{V}^H \mathbf{V} \boldsymbol{\Sigma} \mathbf{U}^H \notag \\
	&= \mathbf{U} \boldsymbol{\Sigma}^2 \mathbf{U}^H,
\end{align}
where $\boldsymbol{\Sigma} = \mathrm{diag}(\sigma_1, \sigma_2)$ contains the singular values of $\mathbf{H}$, and $\mathbf{U}$ and $\mathbf{V}$ are unitary matrices. 
Denote the two eigenvalues of $\mathbf G$ by $\{\gamma_j\}_{j=1}^{2}$, ordered as $\gamma_1\geq\gamma_2\geq0$. They are equal to the squared singular values of $\mathbf H$, directly linking the channel's energy distribution to its singular value decomposition.

Based on the eigenvalues of $\mathbf{G}$ and following the method in~\cite{Song2024}, EDoF are defined as the minimum number of dominant components required to capture a prescribed fraction of the total eigenvalue energy, i.e.,
\begin{equation}
	\label{Edof_condition}
	\frac{\sum_{j=1}^{\mathfrak{D} (\mathbf{G})} \gamma_j}{\sum_{j=1}^{2} \gamma_j} \geq \tau,
\end{equation}
where $\tau \in (0,1]$ is a prescribed threshold. This criterion effectively quantifies the number of spatially separable channels, thereby reflecting the near-field spatial resolution and multiplexing capability of the UCA system.

\begin{mylemma}\label{Glk_express}
	For the two-user channel matrix $\mathbf{H}\in\mathbb{C}^{2\times N}$, let $G_{l,p}$ denote the $(l,p)$-th entry of the gain matrix $\mathbf{G}\in \mathbb{C}^{2\times 2}$, where $l,p\in\{1,2\}$ Then, $\mathbf{G}_{l,p}$ can be approximated as
	\begin{align}
		\label{Glk_appromix}
		\mathbf{G}_{l,p}
		&\approx
		N e^{j\delta d\alpha}
		J_0(\delta d\beta_1)
		J_0(\delta d\beta_2)
		J_0(\delta d\beta_3),
	\end{align}
	where $\delta = l-p$, $\alpha =-\frac{2\pi}{\lambda}(1-\frac{R^2_t}{4r^2})\cos\varphi_r$, $\beta_1 =\frac{2\pi R_t}{r \lambda}\sin\varphi_r $, $\beta_2 =\frac{\pi R^2_t}{2r^2 \lambda }\cos\varphi_r $, $\beta_3
	=\frac{\pi R_t^2}{r^2\lambda}\sin\varphi_r$, and $J_0(\cdot)$ denotes the 0-order bessel function of the first kind.
\end{mylemma}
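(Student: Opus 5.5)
The plan is to compute $\mathbf{G}=\mathbf{H}\mathbf{H}^H$ entrywise from \eqref{H_matrix}. This gives $G_{l,p}=\sum_{n=1}^N e^{-j\frac{2\pi}{\lambda}(r_{l,n}-r_{p,n})}$, where $r_{1,n}=r_{a,n}$ and $r_{2,n}=r_{b,n}$.

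The diagonal entries ($\delta=0$) equal $N$ exactly, which agrees with \eqref{Glk_appromix} since $J_0(0)=1$. For the off-diagonal entries I use \eqref{r2n_appromix}: the difference $r_{b,n}-r_{a,n}$ is given explicitly there in terms of $\Delta r=d\cos\varphi_r$ and $\Delta\phi=d\sin\varphi_r/r$. The case $\delta=-1$ (i.e., $G_{1,2}$) then follows from the $\delta=1$ case by conjugation, which only flips the sign of the phase; this matches the $\delta$-dependence of $e^{j\delta d\alpha}$ and the evenness of $J_0$.

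The key step is to rewrite the phase of each term using double-angle identities, so that it splits into a constant plus harmonics in $\psi_n$:
\begin{align*}
\sin^2\psi_n&=\tfrac12-\tfrac12\cos 2\psi_n,\\
\sin\psi_n\cos\psi_n&=\tfrac12\sin 2\psi_n .
\end{align*}
The radial term then becomes $\Delta r\bigl(1-\frac{R_t^2}{4r^2}\bigr)+\frac{R_t^2}{4r^2}\Delta r\cos 2\psi_n$. Its constant part, multiplied by $\frac{2\pi}{\lambda}$, produces the common phase $e^{j\delta d\alpha}$ with $\alpha=-\frac{2\pi}{\lambda}\bigl(1-\frac{R_t^2}{4r^2}\bigr)\cos\varphi_r$. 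The exponent that remains is a sum of three single-harmonic terms:
\begin{itemize}
\item $\frac{2\pi}{\lambda}R_t\Delta\phi\sin\psi_n$, with amplitude $d\beta_1$;
\item $\frac{2\pi}{\lambda}\frac{R_t^2}{4r^2}\Delta r\cos2\psi_n$, with amplitude $d\beta_2$;
\item $\frac{2\pi}{\lambda}\frac{R_t^2}{2r}\Delta\phi\sin 2\psi_n$, with amplitude $d\beta_3$.
\end{itemize}
Substituting $\Delta r$ and $\Delta\phi$ confirms that these coefficients coincide with $\beta_1,\beta_2,\beta_3$ as stated in the lemma.

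Next, I replace the sum over the $N$ uniformly spaced elements by $\frac{N}{2\pi}\int_0^{2\pi}(\cdot)\,d\psi$. This is justified when $N$ is large enough that the higher-order aliasing terms of the discrete sum (of order $N$ in the mode index) are negligible. The integrand is a product of three exponentials of the form $e^{jx\sin(m\psi)}$ or $e^{jx\cos(m\psi)}$ with $m\in\{1,2\}$. The tool to handle each factor is the Jacobi--Anger expansion, which expresses it as a sum of Bessel functions times harmonics.

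The main obstacle is that the average of a product of these exponentials is not, in general, the product of their averages. Expanding all three via Jacobi--Anger gives cross terms with $J_{k_1}J_{k_2}J_{k_3}$ whenever $k_1+2k_2+2k_3$ combine to zero harmonic. My plan is to argue that these cross terms are negligible. In the regime $R_t\ll r$ one has $\beta_2,\beta_3\ll\beta_1$, so $J_{k}(\delta d\beta_{2,3})$ for $k\neq0$ is small. Keeping only the leading ($k=0$) terms of the $\beta_2$ and $\beta_3$ factors then makes the $\beta_1$ factor's average reduce to $J_0$ as well. The result is the factorized approximation $\frac{1}{2\pi}\int e^{j(\cdot)}\approx J_0(\delta d\beta_1)J_0(\delta d\beta_2)J_0(\delta d\beta_3)$, with each factor obtained from $\frac{1}{2\pi}\int_0^{2\pi}e^{jx\cos(m\psi+\theta)}d\psi=J_0(x)$. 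Multiplying by $N$ and by the common phase $e^{j\delta d\alpha}$ yields \eqref{Glk_appromix}.
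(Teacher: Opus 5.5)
Your proposal is correct and follows essentially the same route as the paper's proof in Appendix~A. Both substitute the local path-difference expansion, apply the Jacobi--Anger expansion to the three harmonic factors, discard aliasing (your sum-to-integral replacement is equivalent to the paper's keeping only $q=0$ in the discrete phase-mode orthogonality), and retain only the zero-order Bessel term.

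One nuance concerns the last step. You only need $\delta d\beta_2$ and $\delta d\beta_3$ themselves to be small; the paper assumes all three arguments are small. You should state this as absolute smallness rather than infer it from $\beta_2,\beta_3\ll\beta_1$, because that relation fails as $\varphi_r\to0$. In that limit, however, $\beta_1=\beta_3=0$, so the factorization is exact up to aliasing.
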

\begin{IEEEproof}
	The proof is provided in  Appendix~A.
\end{IEEEproof}
Based on Lemma~\ref{Glk_express}, substituting the closed-form entries of $\mathbf G$ into its characteristic equation yields the following approximations for the ordered eigenvalues defined above:
\begin{equation}
	\begin{aligned}
		\gamma_1 &\approx N\left(1 + J_0(d\beta_1)J_0(d\beta_2)J_0(d\beta_3)\right), \\
		\gamma_2 &\approx N\left(1 - J_0(d\beta_1)J_0(d\beta_2)J_0(d\beta_3)\right).
	\end{aligned}
\end{equation}

Substituting $\gamma_1$ and $\gamma_2$ into \eqref{Edof_condition} yields the condition for $\mathrm{EDoF}=2$, which defines the resolution limit at which the UCA can spatially resolve the two users and is expressed as
\begin{equation}
	\label{boundary_distinguish}
	\ J_0(d\beta_1)J_0(d\beta_2)J_0(d\beta_3)= 2\tau -1.
\end{equation} 

For a prescribed threshold \(\tau\), the minimum resolvable distance \(d_{\mathrm{ms}}(\varphi_r,r)\) at relative azimuth \(\varphi_r\) and reference range \(r\) is the smallest positive separation satisfying~\eqref{boundary_distinguish}, i.e.,
 \begin{align}
	d_{\mathrm{ms}}(\varphi_r,r)
	\!= \!\min\left\{\,\! d>0  \!\;\middle|\;  \!J_0(d\beta_1)J_0(d\beta_2)J_0(d\beta_3)\!=\!2\tau-1 \, \!\right\}.
\end{align}

For analytical tractability, by applying the small-argument approximation of the zeroth-order Bessel function, i.e., 
\(J_0(x)\approx1-\frac{x^2}{4}\) for \(|x|\ll1\), and retaining only the dominant \(O(d^2)\) term, we have
\begin{align}
	J_0(d\beta_1)
	J_0(d\beta_2)
	J_0(d\beta_3)
	&\approx
	1-\frac{d^2}{4}
	\left(
	\beta_1^2+\beta_2^2+\beta_3^2
	\right),
\end{align}
where the higher-order terms of \(d^4\) and above are neglected, since \(d\beta_1\), \(d\beta_2\), and \(d\beta_3\) are sufficiently small in the local resolution region. 
Thus, an approximate closed-form expression for \(d_{\mathrm{ms}}(\varphi_r,r)\) is obtained by
\begin{equation}
	\label{min_d}
	d_{\mathrm{ms}}(\varphi_r,r)
	\approx
	\sqrt{
		\frac{
			8(1-\tau)
		}{
			\beta_1^2+\beta_2^2+\beta_3^2
		}
	}.
\end{equation}

When the distance \(r\) from User~\(a\) to the center of the UCA is fixed, the minimum resolvable distance can be determined for each relative azimuth angle \(\varphi_r\). 
Therefore, \(d_{\mathrm{ms}}(\varphi_r,r)\) characterizes the minimum spatial resolution of the UCA in the joint angle--distance domain.
Equation~\eqref{min_d} shows that \(d_{\mathrm{ms}}\) generally decreases with \(R_t\), increases with \(r\), and varies with \(\varphi_r\) through \(\beta_1\), \(\beta_2\), and \(\beta_3\). Although \(N\) does not appear explicitly in this phase-mode approximation, finite array sampling affects its accuracy. This interpretation applies to the LoS-dominant model.

\section{Proposed Hierarchical Codebook Design}
\label{codebook design: analysis}

In this section, we develop a decoupled spatial-resolution-aware hierarchical codebook for near-field UCA systems. 
The proposed codebook exploits the phase-mode structure of the UCA to alleviate the near-field angle--distance coupling and reduce redundant beam search. 
Specifically, the first layer constructs DRBF beams for efficient coarse angular localization, while the second layer performs resolution-aware refinement in the angle--distance domain according to the approximate closed-form expression of \(d_{\mathrm{ms}}(\varphi_r,r)\) derived in the previous section.

\subsection{Layer-1 Power-Efficient DRBF Codebook}
\label{subsec:drbf}

We propose a power-efficient distance-robust beamforming (DRBF) codebook that combines phase-mode truncation with regularized modal compensation to improve the worst-case absolute amplitude gain while retaining distance robustness under unit-norm transmission. Here, power efficiency refers to improving the coherent array-power gain \(Ng_{\mathrm{abs}}^2\) under a fixed unit-norm transmit constraint, rather than to circuit-level energy efficiency.

Specifically, following the established phase-mode representation for UCA beamforming~\cite{Zhang2019FIBF,Yuan2023} and using the Fresnel approximation in~\eqref{appromix4}, the normalized UCA manifold can be decomposed as
\begin{equation}
	[\mathbf{a}(\phi,r)]_n
	\approx
	\frac{1}{\sqrt N}b_n(\phi)\varrho_n(\phi,r),
	\label{eq:drbf_manifold_decomposition}
\end{equation}
where
\begin{align}
	b_n(\phi)
	&=e^{jk_0R_t\cos(\phi-\psi_n)},\notag\\
	\varrho_n(\phi,r)
	&=\exp\!\left[-j\frac{k_0R_t^2}{2r}
	\sin^2(\phi-\psi_n)\right],
	\label{eq:drbf_range_residual}
\end{align}
where $k_0=2\pi/\lambda$. The factor $b_n(\phi)$ captures the dominant angular phase, whereas $\varrho_n(\phi,r)$ represents the residual distance-dependent wavefront curvature.

Applying the Jacobi--Anger expansion to $b_n(\phi)$ and retaining $|m|\leq M$ gives
\begin{align}
	b_n(\phi)
	&\approx
	\sum_{m=-M}^{M}c_m(\phi)e^{-jm\psi_n},\notag\\
c_m(\phi)
	&=j^mJ_m(k_0R_t)e^{jm\phi}.
	\label{eq:drbf_truncated_phase_modes}
\end{align}
Here, $m$ is the circular-mode index, $M$ is the maximum retained mode order, and $J_m(\cdot)$ is the $m$-th-order Bessel function of the first kind.
Define the phase-mode matrix $\mathbf U_M\in\mathbb C^{N\times(2M+1)}$ by
\begin{equation}
	[\mathbf U_M]_{n,m+M+1}
	=\frac{1}{\sqrt N}e^{-jm\psi_n},
	\qquad m=-M,\ldots,M.
	\label{eq:drbf_phase_mode_matrix}
\end{equation}
For a uniformly sampled UCA, $2M<N$ ensures the alias-free orthogonality
$\mathbf U_M^{\mathrm H}\mathbf U_M=\mathbf I_{2M+1}$. For reference, define a strict inverse-modal baseline for target azimuth $\phi_s$ as
\begin{equation}
	q_{s,m}^{\mathrm{inv}}
	=\frac{1}{c_m^*(\phi_s)},
	\qquad
	\mathbf w_s^{\mathrm{inv}}
	=\frac{\mathbf U_M\mathbf q_s^{\mathrm{inv}}}
	{\|\mathbf U_M\mathbf q_s^{\mathrm{inv}}\|_2}.
	\label{eq:inverse_modal_unit_power_beam}
\end{equation}
where $\mathbf q_s^{\mathrm{inv}}=[q_{s,-M}^{\mathrm{inv}},\ldots,q_{s,M}^{\mathrm{inv}}]^T$. The equality $c_m^*(\phi_s)q_{s,m}^{\mathrm{inv}}=1$ shows that strict inversion equalizes the retained angular modal coefficients. However, since $\varrho_n(\phi,r)$ is not included in the phase-mode synthesis, its distance-dependent curvature is not explicitly compensated. It remains to determine whether truncation to $|m|\leq M$ limits the resulting intermodal phase dispersion and distance-dependent gain loss. The following lemma quantifies both effects.

\begin{mylemma}
\label{lem:drbf_range_robustness}
Let $g_{\mathrm{inv}}(r)\triangleq g_{\mathrm{abs}}(\phi_s,r;\mathbf w_s^{\mathrm{inv}})$. For $r>R_t$, $k_0R_t\gg1$, $M\ll k_0R_t$, and $2M<N$, the leading distance-dependent phase of the $m$-th retained mode, up to a common phase, satisfies
\begin{equation}
	\widetilde\delta_m(r)
	\approx\frac{m^2}{2k_0r},
	\qquad |m|\leq M.
	\label{eq:modal_range_phase}
\end{equation}
If $M^2/(2k_0r)\ll1$, the corresponding leading-order on-axis amplitude gain satisfies
\begin{equation}
	\frac{g_{\mathrm{inv}}(r)}{g_{\mathrm{inv}}(\infty)}
	\approx
	1-
	\frac{M(M+1)[4M(M+1)-3]}
	{360(k_0r)^2}.
	\label{eq:inverse_modal_range_loss}
\end{equation}
Moreover, relative to the conventional far-field phase error $k_0R_t^2/(2r)$, the maximum retained-mode phase $M^2/(2k_0r)$ is only a fraction $(M/(k_0R_t))^2\ll1$. Hence, modal truncation effectively suppresses the dominant distance-dependent phase variation.
\end{mylemma}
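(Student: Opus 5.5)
The plan is to bypass the Fresnel factorization~\eqref{eq:drbf_manifold_decomposition} and expand the exact spherical-wave manifold into circular phase modes using Graf's addition theorem. For a point at $(r,\phi)$ with $r>R_t$, the element phase $e^{-jk_0 r_n}$ differs from the free-space Green's function only by amplitude. Its dependence on $\psi_n$ is therefore governed by
\begin{equation*}
H_0^{(2)}(k_0 r_n)=\sum_{m}J_m(k_0R_t)H_m^{(2)}(k_0 r)e^{jm(\phi-\psi_n)}.
\end{equation*}
Hence, up to a common complex factor, the $m$-th phase-mode coefficient of $\mathbf a(\phi,r)$ is $J_m(k_0R_t)H_m^{(2)}(k_0r)e^{jm\phi}$, consistent with $c_m(\phi)$ in~\eqref{eq:drbf_truncated_phase_modes} as $r\to\infty$. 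Projecting onto $\mathbf U_M$ uses $\mathbf U_M^{\mathrm H}\mathbf U_M=\mathbf I_{2M+1}$ (from $2M<N$). I would note separately that aliasing of modes $|m'|>M$ onto retained modes is negligible provided $J_{m'}(k_0R_t)$ is small for the aliased orders, which requires $N$ to exceed roughly $2k_0R_t$; this is treated as part of the standing assumptions.

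Next, apply the large-argument Hankel asymptotics for $|m|\le M\ll k_0R_t<k_0r$:
\begin{equation*}
H_m^{(2)}(x)\approx\sqrt{\tfrac{2}{\pi x}}\,e^{-j(x-m\pi/2-\pi/4)}e^{-j(4m^2-1)/(8x)}.
\end{equation*}
After removing the common phase $x-\pi/4+1/(8x)$, and noting that the $j^m$ factor is already absorbed in $c_m$, the residual mode-dependent phase is $\widetilde\delta_m(r)\approx m^2/(2k_0r)$, which is~\eqref{eq:modal_range_phase}. The comparison with the far-field error is then immediate:
\begin{equation*}
\frac{M^2/(2k_0r)}{k_0R_t^2/(2r)}=\Big(\frac{M}{k_0R_t}\Big)^2.
\end{equation*}

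For the gain, observe that $c_m^*(\phi_s)q^{\mathrm{inv}}_{s,m}=1$. Consequently $\mathbf a^{\mathrm H}(\phi_s,r)\mathbf w_s^{\mathrm{inv}}$ is proportional to $\sum_{m=-M}^{M}e^{j\theta_m}$ with $\theta_m=m^2\varepsilon$ and $\varepsilon=1/(2k_0r)$, and the normalization is independent of $r$. The ratio $g_{\mathrm{inv}}(r)/g_{\mathrm{inv}}(\infty)$ is then $\bigl|\tfrac{1}{2M+1}\sum_m e^{j\theta_m}\bigr|$. Expanding to second order with $\varepsilon M^2\ll1$ gives $1-\tfrac12\mathrm{Var}_m(\theta_m)$, where the variance is taken over $m$ uniform on $\{-M,\dots,M\}$. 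The needed power sums are
\begin{equation*}
\mathbb E[m^2]=\tfrac{M(M+1)}{3},\qquad \mathbb E[m^4]=\tfrac{M(M+1)(3M^2+3M-1)}{15},
\end{equation*}
so that $\mathrm{Var}(m^2)=M(M+1)(4M^2+4M-3)/45$. Multiplying by $\varepsilon^2/2=1/(8(k_0r)^2)$ yields exactly~\eqref{eq:inverse_modal_range_loss}.

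The main obstacle is justifying that the mode-dependent amplitude corrections of $H_m^{(2)}$ can be dropped. The relative amplitude term behaves like $(4m^2-1)(4m^2-9)/(128x^2)$ and is formally of order $m^4/x^2$, the same order as the phase-induced loss. I would first try to show that these terms affect $g_{\mathrm{inv}}$ only through a real, nearly uniform rescaling that cancels in the ratio to leading order. If that fails, I would restrict the claim to the phase-only (leading distance-dependent phase) model, which is what ``leading-order'' in the lemma refers to. In that case I would state that amplitude corrections enter at a separate order controlled by $M^2/(k_0r)$ and are not included in~\eqref{eq:inverse_modal_range_loss}.
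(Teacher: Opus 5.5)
Your route is different from the paper's, and most of it is sound. The paper works directly with the dense-UCA modal integral of the unit-modulus manifold, $h_m(r)=\frac{1}{2\pi}\int_0^{2\pi}e^{-jk_0[d(\theta,r)-r]}e^{-jm\theta}\,\mathrm d\theta$. It applies stationary phase near $\theta\approx0$ and $\theta\approx\pi$, checks that both regions acquire the same increment $-m^2/(2k_0r)$ relative to the far field, and collects everything else in a term $\varepsilon_m(r,R_t)$ that it then explicitly ignores. Your Graf/Hankel argument is more compact. The factorization $J_m(k_0R_t)H_m^{(2)}(k_0r)$ makes the equality of the two stationary contributions automatic, and it shows at once that $\widetilde\delta_m$ does not depend on $R_t$. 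Your strict-inversion gain computation (the $1-\tfrac12\operatorname{Var}(\theta_m)$ expansion and the power sums) matches the paper's and is correct.

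\textbf{The gap is the transfer step.} Writing $e^{-jk_0r_n}\approx C\sqrt{r_n}\,H_0^{(2)}(k_0r_n)$ with $C=\sqrt{\pi k_0/2}\,e^{-j\pi/4}$, the factor $\sqrt{r_n}$ depends on $n$ through $r_n\approx r-R_t\cos(\phi-\psi_n)$. It is therefore not a common factor, so ``up to a common complex factor the modal coefficient of $\mathbf a(\phi,r)$ is $J_mH_m^{(2)}e^{jm\phi}$'' does not follow. The factor $\sqrt{r_n}\approx\sqrt r\,[1-\frac{R_t}{2r}\cos(\phi-\psi_n)]$ couples modes $m\pm1$ into mode $m$. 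Let $\alpha_m\approx k_0R_t-\frac{m\pi}{2}-\frac{\pi}{4}$ be defined by $J_m(k_0R_t)\approx\sqrt{2/(\pi k_0R_t)}\cos\alpha_m$. Then $J_{m\pm1}(k_0R_t)\approx\pm\sqrt{2/(\pi k_0R_t)}\sin\alpha_m$ and $H^{(2)}_{m\pm1}(k_0r)\approx\pm jH^{(2)}_m(k_0r)$. The coefficient becomes $J_m(k_0R_t)H_m^{(2)}(k_0r)e^{jm\phi}\bigl[1-j\frac{R_t}{2r}\tan\alpha_m\bigr]$, up to a genuinely common factor.

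This first-order correction is purely imaginary. It is an extra modal phase $\frac{R_t}{2r}\tan\alpha_m$, alternating in sign between neighbouring modes and unbounded near zeros of $J_m(k_0R_t)$. It is exactly the paper's ``stationary-phase amplitude variation'': at finite $r$ the two stationary points carry unequal weights $A_\mp^{-1/2}$. The paper keeps this inside $\varepsilon_m$, drops it, and restricts to modes away from zeros of $J_m$. So your argument reproduces the lemma only if you make the same explicit leading-order neglect; as written, it asserts the correction is absent. The neglect also does not follow from the stated hypotheses. Relative to $\widetilde\delta_m$ the correction has size $(k_0R_t/m^2)|\tan\alpha_m|$, so dropping it is self-consistent for the gain formula only when, roughly, $M^2\gg k_0R_t$.

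By contrast, the obstacle you single out is harmless. Once the first-order term is written as $e^{-j(4m^2-1)/(8k_0r)}$, the $m^4$ part of the second-order coefficient is just the second-order term of that exponential. In fact $|H_m^{(2)}(k_0r)|\approx\sqrt{2/(\pi k_0r)}\,[1+(4m^2-1)/(16(k_0r)^2)]$, a relative correction of order $m^2/(k_0r)^2$. This is smaller than the retained loss $\approx M^4/(90(k_0r)^2)$ by a factor $\mathcal O(M^{-2})$. (Minor: alias-freeness of the retained modes needs only $N-M\gtrsim k_0R_t$, not $N\gtrsim2k_0R_t$. The latter fails for the paper's $N=1024$, $k_0R_t\approx804$.)
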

\begin{IEEEproof}
The proof is given in Appendix~B.
\end{IEEEproof}

Lemma~\ref{lem:drbf_range_robustness} establishes truncation-induced distance robustness, but strict inversion in~\eqref{eq:inverse_modal_unit_power_beam} amplifies weak Bessel modes. Under unit-norm transmission, modal orthogonality gives
\begin{equation}
	g_{\mathrm{inv}}(\infty)
	=\frac{2M+1}
	{\sqrt{\displaystyle\sum_{m=-M}^{M}|c_m(\phi_s)|^{-2}}}.
	\label{eq:inverse_modal_absolute_gain}
\end{equation}
A small $|c_m(\phi_s)|$ thus enlarges the denominator and reduces the absolute gain, limiting strict inversion in practice. We therefore propose regularized modal compensation to suppress weak-mode amplification while retaining distance robustness:
\begin{equation}
\begin{aligned}
	\widetilde{\mathbf q}_s(\zeta)
	&=\arg\min_{\mathbf q}
	\sum_{m=-M}^{M}|c_m^*(\phi_s)q_m-1|^2
	+\zeta\|\mathbf q\|_2^2,\\[-0.2ex]
	\widetilde q_{s,m}(\zeta)
	&=\frac{c_m(\phi_s)}{|c_m(\phi_s)|^2+\zeta}.
\end{aligned}
	\label{eq:regularized_modal_coefficient}
\end{equation}
The proposed unit-norm DRBF codeword is
\begin{equation}
	\mathbf w_s^{\mathrm{DRBF}}(\zeta)
	=\frac{\mathbf U_M\widetilde{\mathbf q}_s(\zeta)}
	{\|\mathbf U_M\widetilde{\mathbf q}_s(\zeta)\|_2}
	\label{eq:regularized_drbf_codeword}
\end{equation}
The normalized effective modal weights induced by $c_m^*(\phi_s)\widetilde q_{s,m}(\zeta)$ satisfy
\begin{equation}
	\omega_m(\zeta)\propto
	\frac{|c_m(\phi_s)|^2}{|c_m(\phi_s)|^2+\zeta},
	\qquad \sum_{m=-M}^{M}\omega_m(\zeta)=1.
\end{equation}
Let $g_{\mathrm{DRBF}}(r)\triangleq g_{\mathrm{abs}}(\phi_s,r;\mathbf w_s^{\mathrm{DRBF}})$. Using~\eqref{eq:modal_range_phase}, its leading relative on-axis response satisfies
\begin{equation}
\begin{aligned}
	\frac{g_{\mathrm{DRBF}}(r)}{g_{\mathrm{DRBF}}(\infty)}
	&\approx\left|\sum_{m=-M}^{M}\omega_m(\zeta)
	e^{-j\widetilde\delta_m(r)}\right|\\[-0.2ex]
	&\approx1-\frac{\operatorname{Var}_{\omega}(m^2)}{8(k_0r)^2},
\end{aligned}
\label{eq:regularized_drbf_range_loss}
\end{equation}
where $\operatorname{Var}_{\omega}(m^2)$ is the weighted variance under $\omega_m(\zeta)$. Thus, regularization suppresses weak modes while preserving the truncation-induced distance-robustness mechanism.

We use the scale-independent parameterization
\begin{equation}
	\zeta=\zeta_{\mathrm{rel}}
	\max_{|m|\leq M}|c_m(\phi_s)|^2.
	\label{eq:relative_regularization_parameter}
\end{equation}
Let $L$ denote the prescribed number of uniformly rotated Layer-1 codewords, let $\mathcal R=[r_{\min},r_{\max}]$, and define $\mathcal S_L=\{(\Delta\phi,r):|\Delta\phi|\leq\pi/L,\ r\in\mathcal R\}$. The candidate modal-order set $\mathcal M$ satisfies $2M<N$ and is selected according to the distance-sensitivity scale $M^2/(2k_0r_{\min})$ and the required angular coverage. For each $M\in\mathcal M$, $\zeta_{\mathrm{rel}}$ is swept over $\mathcal Z$, which comprises the strict-inversion point $\zeta_{\mathrm{rel}}=0$ and a finite logarithmic grid of positive values terminated when the unit-norm beam response becomes insensitive to further regularization.

Let $g_{M,\zeta_{\mathrm{rel}}}(\Delta\phi,r)$ denote the exact spherical-wave absolute gain. The best $\zeta_{\mathrm{rel}}$ is first found for each $M$, after which the resulting candidates are compared according to
\begin{equation}
\begin{aligned}
	(M^\star,\zeta_{\mathrm{rel}}^\star)
	&=\arg\max_{\substack{M\in\mathcal M\\\zeta_{\mathrm{rel}}\in\mathcal Z}}
	\min_{(\Delta\phi,r)\in\mathcal S_L}g_{M,\zeta_{\mathrm{rel}}}(\Delta\phi,r)\\[-0.2ex]
	\mathrm{s.t.}\quad &
	20\log_{10}\!\max_{r\in\mathcal R}
	g_{M,\zeta_{\mathrm{rel}}}(0,r)\\[-0.2ex]
	&\quad
	{}-20\log_{10}\!\min_{r\in\mathcal R}
	g_{M,\zeta_{\mathrm{rel}}}(0,r)\leq\epsilon_r.
\end{aligned}
	\label{eq:regularized_sector_maxmin}
\end{equation}
By UCA rotational symmetry, this offline search is performed only for one reference sector, and the complete codebook $\mathbf W^{\mathrm{DRBF}}=\{\mathbf w_s^{\mathrm{DRBF}}\}_{s=0}^{L-1}$ is obtained by rotation.

%
\subsection{Layer-2: Full-Precision Codebook}

In this subsection, we develop a high-resolution full-precision codebook by selecting the angular and radial sampling intervals according to the minimum resolvable distance \(d_{\mathrm{ms}}(\varphi_r,r)\) derived in Section~III.

%
%

\begin{corollary}\label{corollary_3dB}
	Consider a densely sampled UCA under the Fresnel condition \(R_t/r\ll1\) and a local separation \(d/r\ll1\), such that \(|d\beta_1|\), \(|d\beta_2|\), and \(|d\beta_3|\) are sufficiently small. When the near-field beamforming vector
	\(\mathbf{a}(\phi+\Delta\phi,r+\Delta r)\) is employed to serve a user located at \((\phi,r)\),
	the corresponding near-field beamforming gain can be approximated as
	\begin{align}
		g^{\mathrm{NF}}(\phi,r,\phi+\Delta\phi,r+\Delta r)
		&\approx \left|J_0(d\beta_1)J_0(d\beta_2)J_0(d\beta_3)\right|,
	\end{align}
	where \(\Delta r=d\cos\varphi_r\), \(\Delta\phi=d\sin\varphi_r/r\), and \(\beta_1\), \(\beta_2\), and \(\beta_3\) are defined in Lemma~\ref{Glk_express}.
\end{corollary}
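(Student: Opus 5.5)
The plan is to reduce the corollary directly to Lemma~\ref{Glk_express} by identifying the beamforming gain with a normalized off-diagonal entry of the gain matrix $\mathbf G$.

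Using~\eqref{eq:absolute_amplitude_gain} with $\mathbf w=\mathbf a(\phi+\Delta\phi,r+\Delta r)$, which has unit norm, the gain is
\[
g^{\mathrm{NF}}=\left|\mathbf a^{\mathrm H}(\phi,r)\mathbf a(\phi+\Delta\phi,r+\Delta r)\right|
=\frac1N\left|\sum_{n=1}^N e^{j\frac{2\pi}{\lambda}(r_{a,n}-r_{b,n})}\right|,
\]
where User~$a$ sits at $(\phi,r)$ and User~$b$ at $(\phi+\Delta\phi,r+\Delta r)$. The common phase terms $e^{\pm j\frac{2\pi}{\lambda}r}$ in $\mathbf a$ factor out of the modulus. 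Up to complex conjugation, the sum is exactly the entry $G_{1,2}$ or $G_{2,1}$ of $\mathbf G=\mathbf H\mathbf H^H$ with $\mathbf H$ as in~\eqref{H_matrix}. Hence $g^{\mathrm{NF}}=|G_{1,2}|/N$.

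By the rotational symmetry of the UCA, noted in Section~III, we may take $\phi=0$. Then the displacement parametrization $\Delta r=d\cos\varphi_r$, $\Delta\phi=d\sin\varphi_r/r$ coincides with that used in~\eqref{r2n_appromix}. The hypotheses of the corollary are exactly those under which the expansions~\eqref{r1n_appromix}--\eqref{r2n_appromix} and Lemma~\ref{Glk_express} hold:
\begin{itemize}
\item the Fresnel regime $R_t/r\ll1$;
\item a local separation $d/r\ll1$, so the first-order Taylor step is valid;
\item dense sampling, so the element sum can be replaced by the integral $\frac{N}{2\pi}\int_0^{2\pi}(\cdot)\,d\psi$.
\end{itemize}
Invoking Lemma~\ref{Glk_express} with $\delta=\pm1$ gives $G_{1,2}\approx N e^{\pm jd\alpha}J_0(d\beta_1)J_0(d\beta_2)J_0(d\beta_3)$. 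Taking the modulus removes the unit-modulus factor $e^{\pm jd\alpha}$, and dividing by $N$ yields the claimed expression.

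If the reduction to the lemma needs to be spelled out, I would write the phase difference $\frac{2\pi}{\lambda}(r_{b,n}-r_{a,n})$ from~\eqref{r2n_appromix}. This consists of a constant term producing $e^{jd\alpha}$, after averaging $\sin^2\psi_n$ to $1/2$, which yields the factor $(1-R_t^2/(4r^2))$ in $\alpha$. The remaining terms are of the form $x_1\sin\psi+x_2\cos2\psi+x_3\sin2\psi$, using $\sin^2\psi=(1-\cos2\psi)/2$ and $\sin\psi\cos\psi=\frac12\sin2\psi$. The $\psi$-average of the exponential of these terms then factorizes approximately into $J_0(x_1)J_0(x_2)J_0(x_3)$ via the integral representation $J_0(x)=\frac1{2\pi}\int e^{jx\sin\psi}d\psi$.

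The main obstacle is precisely this factorization. The three harmonics are not independent, so the average of the product is not exactly the product of averages; cross terms between $\sin\psi$ and $\cos2\psi,\sin2\psi$ appear. I would justify the approximation using the smallness assumption on $|d\beta_i|$. Expanding each exponential to second order, the cross terms of the form $\langle\sin\psi\cos2\psi\rangle$ and $\langle\cos2\psi\sin2\psi\rangle$ vanish by orthogonality. As a result, the exact average and the product $\prod_i J_0(d\beta_i)$ agree up to $O(d^2)$, namely $1-\frac{d^2}{4}\sum_i\beta_i^2$, and differ only at higher order. This is consistent with how~\eqref{min_d} was derived, and it is exactly why the corollary restricts to sufficiently small $|d\beta_i|$.
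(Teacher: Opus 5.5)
Your proposal is correct and follows essentially the paper's route. The paper also writes the gain as the normalized manifold correlation (your $|\mathbf G_{1,2}|/N$), substitutes~\eqref{r2n_appromix}, removes the common phase, and reuses the Jacobi--Anger and phase-mode-orthogonality argument of Appendix~A. Your second-order orthogonality check is, if anything, a cleaner justification of the final truncation than the paper's step~(c): there, the $(m,\nu)=(\pm1,\mp1)$ cross-mode terms are individually $O(d^2)$ and drop out only because they cancel in pairs.
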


\begin{IEEEproof}
	The gain equals the magnitude of the normalized manifold correlation \(\mathbf a^H(\phi,r)\mathbf a(\phi+\Delta\phi,r+\Delta r)\). Substituting the local path-difference expansion in~\eqref{r2n_appromix}, removing the common phase, and applying the Jacobi--Anger expansion and UCA phase-mode orthogonality as in Appendix~A yields the stated result. Under the conditions of the corollary, the higher-order cross-mode and finite-array aliasing terms are negligible, leaving the dominant zero-order Bessel contribution.
\end{IEEEproof}

The normalized manifold correlation in Corollary~\ref{corollary_3dB} is also the amplitude gain of a near-field matched beam steered to the displaced location. Moreover, the EDoF boundary in~\eqref{boundary_distinguish} corresponds to the amplitude-correlation level \(2\tau-1\). Therefore, the EDoF threshold can be mapped consistently to an amplitude-gain criterion for determining the sampling intervals of the FP codebook.
Within the local regime of Corollary~\ref{corollary_3dB}, choosing \(\tau=0.75\) gives \(2\tau-1=0.5\), and hence
\(\left|J_0(d\beta_1)J_0(d\beta_2)J_0(d\beta_3)\right|=0.5\). 
Thus, the boundary induced by the EDoF threshold \(\tau=0.75\) is equivalent to the half-amplitude sampling rule adopted in existing codebook designs~\cite{Wu2024UCA}. The resulting \(d_{\mathrm{ms}}(\varphi_r,r)\) is therefore used to determine the FP-codebook sampling intervals. This boundary corresponds to one-quarter of the peak power rather than a half-power threshold.
Next, we discretize the near-field UCA beam space along the angular and radial dimensions~\cite{Wu2024UCA}, with sampling intervals determined by \(d_{\mathrm{ms}}(\varphi_r,r)\) under the prescribed resolution threshold.
By leveraging the closed-form expression of \(d_{\mathrm{ms}}(\varphi_r,r)\) in~\eqref{min_d}, the sampling steps in the angular and radial domains are given by
\begin{equation}
	\label{sampling_step}
	\begin{cases}
		\displaystyle
		\Delta\phi_\tau
		=
		\frac{\sqrt{8(1-\tau)}\lambda}
		{\pi R_t\sqrt{4+\frac{R_t^2}{r^2}}}
		\approx
		\frac{\sqrt{8(1-\tau)}\lambda}{2\pi R_t},
		\\[3mm]
		\displaystyle
		\Delta r_\tau
		=
		\frac{2\sqrt{8(1-\tau)}\lambda r^2}{\pi R_t^2}.
	\end{cases}
\end{equation}

The angular and radial expressions in~\eqref{sampling_step} follow by setting \(\varphi_r=\pi/2\) for pure angular separation and \(\varphi_r=0\) for pure radial separation, respectively. To obtain a uniform angular grid independent of distance, the codebook construction below adopts the distance-independent approximation of \(\Delta\phi_\tau\) in~\eqref{sampling_step}.

To examine joint coverage, substituting \(\Delta r=d\cos\varphi_r\) and \(\Delta\phi=d\sin\varphi_r/r\) into~\eqref{min_d} yields
\begin{equation}
\begin{aligned}
	g^{\mathrm{NF}}(\Delta\phi,\Delta r;r)
	&\approx 1-\frac{\pi^2R_t^2}{4\lambda^2}\\[-0.3ex]
	&\quad\times\left[\left(4+\frac{R_t^2}{r^2}\right)(\Delta\phi)^2
	+\frac{R_t^2}{4r^4}(\Delta r)^2\right],\\[-0.2ex]
	g_{\mathrm{corner}}^{\mathrm{2D}}(r)
	&\approx1-(1-\tau)\left(1+\frac{R_t^2}{8r^2}\right)\\[-0.3ex]
	&\geq 2\tau-1,\quad r>R_t.
\end{aligned}
\label{eq:joint_2d_coverage_bound}
\end{equation}
For the distance-dependent angular interval in~\eqref{sampling_step}, half-cell offsets give \(g_{\mathrm{corner}}^{\mathrm{2D}}\approx\tau\); the displayed conservative bound applies to the adopted uniform grid. Hence, the Cartesian product has an explicit joint local guarantee.

Let \(L_1\) and \(L_2\) denote the resulting numbers of angular and radial grid points, respectively, with index sets \(\Xi_\phi=\{0,\ldots,L_1-1\}\) and \(\Xi_r=\{0,\ldots,L_2-1\}\). The resulting FP codebook is the Cartesian product
\begin{equation}
	\mathbf{W}^{\mathrm{FP}} =
	\left\{
	\mathbf{a}(\phi_{i_\phi},r_{i_r})
	\mid i_\phi\in\Xi_\phi,\ i_r\in\Xi_r
	\right\},
\end{equation}
and its size is \(N_c=L_1L_2\).
To validate coverage without the local approximation, define the exact-spherical worst-case gain over \(\mathcal D_{\mathrm{FP}}=[0,2\pi)\times[r_{\min},r_{\max}]\) as
\begin{equation}
	g_{\min}^{\mathrm{2D}}
	=\min_{(\phi,r)\in\mathcal D_{\mathrm{FP}}}
	\max_{\mathbf w_i^{\mathrm{FP}}\in\mathbf W^{\mathrm{FP}}}
	\left|\mathbf a^{\mathrm H}(\phi,r)\mathbf w_i^{\mathrm{FP}}\right|.
\label{eq:exact_2d_coverage_metric}
\end{equation}
The exact criterion is \(g_{\min}^{\mathrm{2D}}\geq2\tau-1\), i.e., \(g_{\min}^{\mathrm{2D}}\geq0.5\) for \(\tau=0.75\). Unlike the local bound, this metric evaluates angle--distance coupling and every cell interior without pointwise normalization. The construction is summarized in \textbf{Algorithm~\ref{alg:fp_codebook}}.
The construction has complexity $\mathcal{O}(NL_1L_2)$ and can be performed offline or over a local candidate region; its resolution is controlled by $\tau$.

\begin{algorithm}[h]
	\caption{Construction of the FP codebook \(\mathbf{W}^{\mathrm{FP}}\)}\label{alg:fp_codebook}
	\begin{algorithmic}[1]
		\State \textbf{Input:} Minimum distance \(r_{\min}\); maximum distance \(r_{\max}\); circular array radius \(R_t\); wavelength \(\lambda\); resolution threshold \(\tau\).
		
		\State Set $\Delta\phi_\tau=\frac{\sqrt{8(1-\tau)}\lambda}{2\pi R_t}$.
		\State Calculate the angular sampling size $L_1=\left\lceil\frac{2\pi}{\Delta\phi_\tau}\right\rceil$.
		\State Set \(\Xi_\phi=\{0,\ldots,L_1-1\}\) and \(\phi_{i_\phi}=2\pi i_\phi/L_1\) for every \(i_\phi\in\Xi_\phi\).
		\State Initialize \(r_0\leftarrow r_{\min}\) and \(i_r\leftarrow0\).
	    \While{$r_{i_r}+\frac{2\sqrt{8(1-\tau)}\lambda r_{i_r}^2}{\pi R_t^2}\leq r_{\max}$}
		\State $r_{i_r+1}\leftarrow r_{i_r}+\frac{2\sqrt{8(1-\tau)}\lambda r_{i_r}^2}{\pi R_t^2}$.
		\State $i_r\leftarrow i_r+1$.
		\EndWhile
		\State If $r_{i_r}<r_{\max}$, append $r_{\max}$ and update $i_r\leftarrow i_r+1$.
		\State Set \(L_2=i_r+1\) and \(\Xi_r=\{0,\ldots,L_2-1\}\).
		\State Construct the FP codebook as
			\[
			\mathbf{W}^{\mathrm{FP}}
			=
			\left\{
			\mathbf{a}(\phi_{i_\phi},r_{i_r})
			\mid
			i_\phi\in\Xi_\phi,\,
			i_r\in\Xi_r
			\right\}.
			\]
		\State \textbf{Output:} FP codebook \(\mathbf{W}^{\mathrm{FP}}\).
	\end{algorithmic}
\end{algorithm}

\section{Proposed Near-Field HDA-BAR Beam Training}
This section introduces a 2D near-field HDA-BAR beam training scheme that builds upon the hierarchical codebook developed above.
Although the system model considers multiple users, the beam training procedure is described for an arbitrary User~\(u\in\mathcal{K}\), and the same procedure can be independently applied to other users. 
For notational simplicity, the user index is omitted when no ambiguity arises.
As illustrated in Fig.~\ref{fig:hda3}, the proposed scheme consists of two sequential stages.
In the first stage, the azimuth angle of the considered user is rapidly estimated using the coarse DRBF codebook.
In the second stage, the BAR method is employed over the reduced FP codebook to efficiently identify the optimal near-field beam with high accuracy.

\begin{figure}[t]
	\centering
	\includegraphics[width=0.85\linewidth]{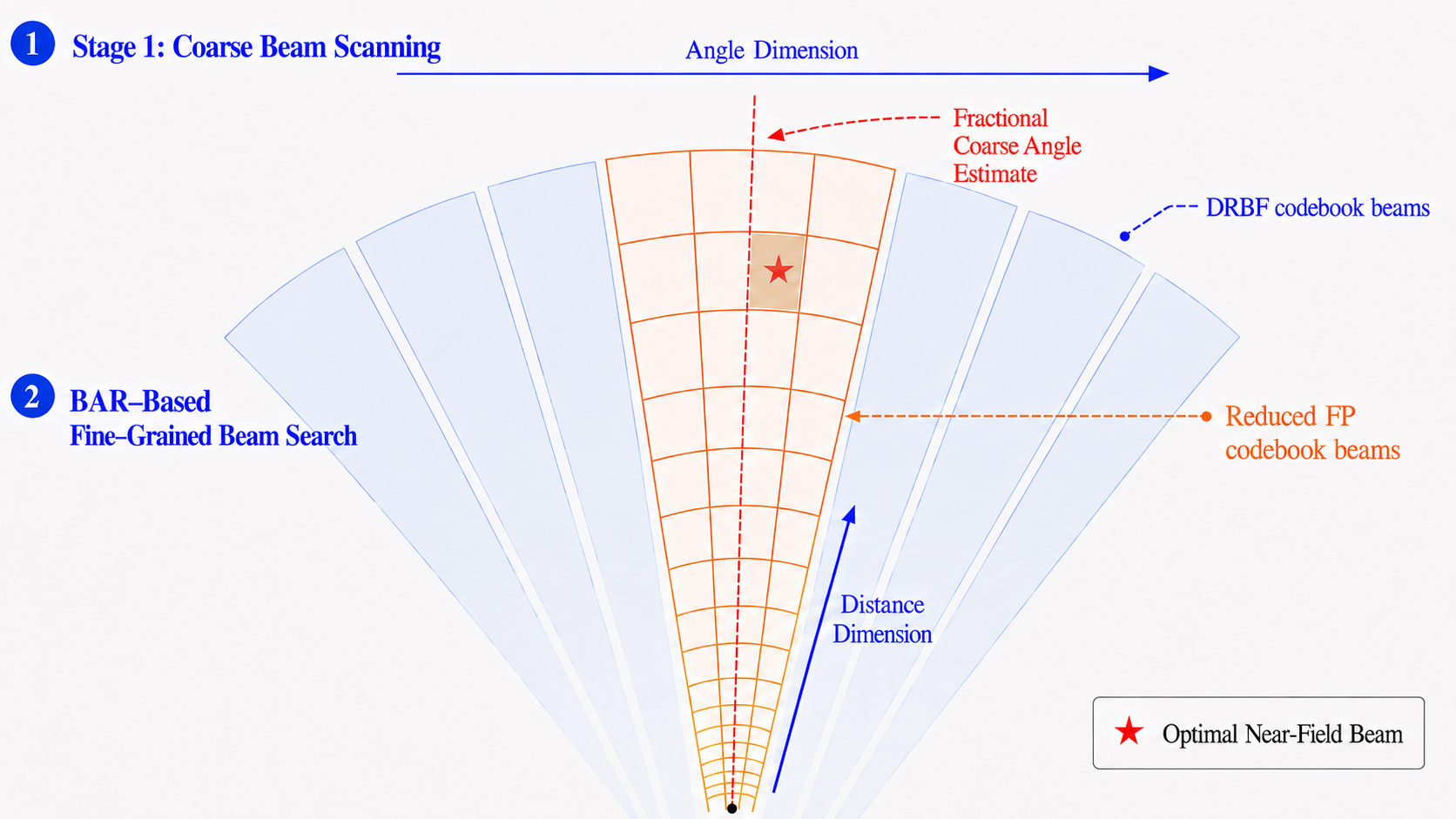}
	\caption{\footnotesize Illustration of the proposed two-stage near-field HDA-BAR beam training scheme.}
	\label{fig:hda3}
\end{figure}

\subsection{Energy-Correlation-Based Fractional Angle Refinement}
\label{subsec:energy_correlation_refinement}

In Stage~1, the $L$ codewords in $\mathbf W^{\mathrm{DRBF}}$, indexed by $\eta=0,\ldots,L-1$, sweep all azimuth sectors with $\phi_\eta=2\pi\eta/L$. Using an orthogonal training pilot of power $P_{\mathrm{tr}}$, the observation and measured energy for the $\eta$-th codeword are
\begin{align}
	y_\eta
	&=\sqrt{P_{\mathrm{tr}}}\,\mathbf h^{\mathrm H}
	\mathbf w_\eta^{\mathrm{DRBF}}+n_\eta,\notag\\
	z_\eta&=|y_\eta|^2,
	\qquad n_\eta\sim\mathcal{CN}(0,\sigma^2).
	\label{eq:stage1_energy_observation}
\end{align}
The maximum-energy codeword provides the sector-level estimate
\begin{equation}
	\eta^\star=\arg\max_{0\leq\eta<L}z_\eta,
	\qquad
	\phi_{\max}=\frac{2\pi\eta^\star}{L}.
	\label{eq:drbf_coarse_direction}
\end{equation}

Owing to UCA rotational symmetry, the local energy responses of different DRBF sectors are circular shifts of a common reference template. Let $Q$ denote the number of neighboring codewords on either side of $\eta^\star$, let $\mathcal Q=\{-Q,\ldots,Q\}$, and let $[\cdot]_L$ denote modulo-$L$ indexing. The corresponding local energy vector is
\begin{equation}
	[\mathbf z_{\mathrm{loc}}]_{q+Q+1}
	=z_{[\eta^\star+q]_L},
	\qquad q\in\mathcal Q.
	\label{eq:local_energy_vector}
\end{equation}
Let $\mathcal E\subseteq[-\pi/L,\pi/L]$ be a fractional-offset grid, and choose $N_r$ offline distance samples $\{r_j\}_{j=1}^{N_r}\subseteq\mathcal R$. Using the exact spherical-wave manifold, define the distance-averaged reference template as
\begin{equation}
	[\mathbf t(\epsilon)]_{q+Q+1}
	=\frac{1}{N_r}\sum_{j=1}^{N_r}
	g_{\mathrm{abs}}^2
	\bigl(\epsilon,r_j;\mathbf w_{[q]_L}^{\mathrm{DRBF}}\bigr),
	\quad q\in\mathcal Q.
	\label{eq:drbf_energy_template}
\end{equation}
Only one reference sector is required. With $\bar z=\mathbf 1^{\mathrm T}\mathbf z_{\mathrm{loc}}/(2Q+1)$ and $\bar t(\epsilon)=\mathbf 1^{\mathrm T}\mathbf t(\epsilon)/(2Q+1)$, define $\check{\mathbf z}_{\mathrm{loc}}=\mathbf z_{\mathrm{loc}}-\bar z\mathbf 1$ and $\check{\mathbf t}(\epsilon)=\mathbf t(\epsilon)-\bar t(\epsilon)\mathbf 1$. Their normalized correlation is
\begin{equation}
	\mathcal C(\epsilon)
	=\frac{\check{\mathbf z}_{\mathrm{loc}}^{\mathrm T}
	\check{\mathbf t}(\epsilon)}
	{\|\check{\mathbf z}_{\mathrm{loc}}\|_2
	\|\check{\mathbf t}(\epsilon)\|_2}.
	\label{eq:energy_template_correlation}
\end{equation}
The fractional offset and the refined azimuth estimate are obtained as
\begin{equation}
	\hat\epsilon_\phi
	=\arg\max_{\epsilon\in\mathcal E}\mathcal C(\epsilon),
	\qquad
	\hat\phi
	=\operatorname{wrap}_{[0,2\pi)}
	\bigl(\phi_{\max}+\hat\epsilon_\phi\bigr).
	\label{eq:energy_correlation_angle_estimate}
\end{equation}
Centering and normalization remove the common energy offset and overall scale. The template set $\mathcal T_{\mathrm{EC}}=\{\mathbf t(\epsilon):\epsilon\in\mathcal E\}$ is generated offline; hence, the refinement requires no additional pilot and incurs $\mathcal O((2Q+1)|\mathcal E|)$ online operations.

\subsection{BAR-Based Fine-Grained Beam Search}
After obtaining the refined azimuth estimate \(\hat{\phi}\), the second-layer FP codebook is pruned in the angular domain. Since exhaustive search over the remaining candidates may still incur considerable training overhead, the proposed BAR search exploits their correlation for efficient beam identification.

Since adjacent Layer-1 DRBF codewords are separated by $2\pi/L$, the pruning half-width is set to $\Delta_\phi=2\pi/L$, thereby retaining one Layer-1 angular interval on either side of \(\hat{\phi}\) without introducing an additional tuning parameter. The reduced codebook is constructed as
\begin{equation}
	\mathbf{W}^{\mathrm{Red}}
	=
	\left[
	\mathbf{w}^{\mathrm{FP}}_i
	\,\middle|\,
	\substack{
	\mathbf{w}^{\mathrm{FP}}_i\in\mathbf{W}^{\mathrm{FP}},\\[-0.2ex]
	\left|\operatorname{wrap}_{[-\pi,\pi)}
	(\phi_i-\hat{\phi})\right|\leq\Delta_\phi
	}
	\right]
	\in
	\mathbb{C}^{N\times N_{\mathrm{red}}},
	\label{W_red}
\end{equation}
where \(\phi_i\) and \(r_i\) denote the angular and radial grid points of \(\mathbf{w}^{\mathrm{FP}}_i\), respectively, and \(N_{\mathrm{red}}\) is the number of retained candidate codewords. The circular angular difference in~\eqref{W_red} ensures continuous pruning across the $0$--$2\pi$ boundary.
The objective is to select the optimal beam from \(\mathbf{W}^{\mathrm{Red}}\) by maximizing the received signal power, i.e.,
\begin{equation}
	\mathbf{w}^{\ast}
	=
	\underset{\mathbf{w}^{\mathrm{Red}}_i\in\mathbf{W}^{\mathrm{Red}}}{\arg\max}
	\;
	\left|
	\mathbf{h}^{\mathrm H}
	\mathbf{w}^{\mathrm{Red}}_i
	\right|^2,
\end{equation}
where \(\mathbf{w}^{\mathrm{Red}}_i\) denotes the \(i\)-th candidate codeword in \(\mathbf{W}^{\mathrm{Red}}\).
Based on the reduced candidate codebook \(\mathbf{W}^{\mathrm{Red}}\), we propose a BAR-based beam training scheme to identify the optimal beam efficiently.
By exploiting the strong correlation among the candidate codewords in \(\mathbf{W}^{\mathrm{Red}}\), the posterior distribution is iteratively updated with a limited number of observations.
At each iteration, the next codeword is adaptively selected according to both the predictive mean and the associated uncertainty.

For the candidate codewords \(\mathbf{w}^{\mathrm{Red}}_i\), 
\(i=1,2,\ldots,N_{\mathrm{red}}\), the noise-free received beam-energy sequence can be written as
\begin{equation}
	p_i^{\mathrm{Red}}
	=
	\left|
	\mathbf{h}^{\mathrm H}
	\mathbf{w}^{\mathrm{Red}}_i
	\right|^2,
	\quad i=1,2,\ldots,N_{\mathrm{red}}.
\end{equation}
Accordingly, the beam-energy vector over the reduced codebook is given by
\begin{equation}
	\mathbf{p}^{\mathrm{Red}}
	=
	\left[
	p_1^{\mathrm{Red}},
	p_2^{\mathrm{Red}},
	\ldots,
		p_{N_{\mathrm{red}}}^{\mathrm{Red}}
	\right]^T,
\end{equation}
where \(\mathbf{p}^{\mathrm{Red}}\in\mathbb{R}^{N_{\mathrm{red}}\times1}\) and 
\(\mathbf{W}^{\mathrm{Red}}\in\mathbb{C}^{N\times N_{\mathrm{red}}}\).
The vector \(\mathbf{p}^{\mathrm{Red}}\) is modeled as a Gaussian process~\cite{Zijian2024,Rasmussen2006GP}, i.e.,
\begin{equation}
	\mathbf{p}^{\mathrm{Red}}
	\sim
	\mathcal{GP}
	\left(
	\mathbf{0},
	\mathbf{K}
	\right),
\end{equation}
where \(\mathbf{K}\in\mathbb{R}^{N_{\mathrm{red}}\times N_{\mathrm{red}}}\) is the covariance matrix. 
In Gaussian process modeling, \(\mathbf{K}\) is constructed using a kernel function that characterizes the correlation between the received beam energies of different candidate codewords.

To characterize the structured beam-energy distribution in near-field UCA systems, we propose a DAC kernel.
Unlike conventional kernels, the DAC kernel captures the fact that the main-lobe energy does not follow a simple elliptical pattern, but instead exhibits an X-shaped main-lobe energy distribution over the angle--distance sampled codebook~\cite{ly2024,Cui2023NearFieldWideband}, as illustrated in Fig.~\ref{fig:XKernel_compare}.
\begin{figure}[t]
	\centering
	\begin{minipage}{0.49\linewidth}
		\centering
		\includegraphics[width=\linewidth]{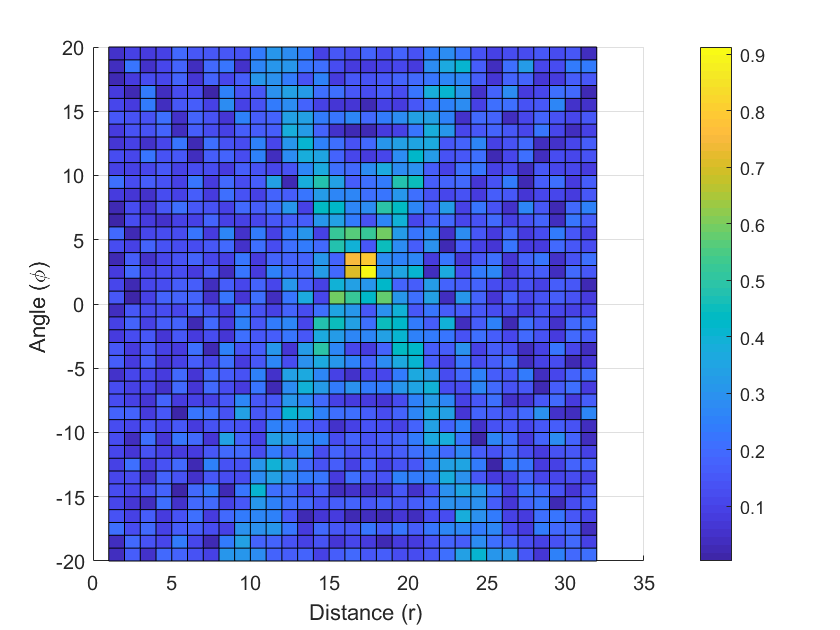}\\
		{\footnotesize (a) Actual energy distribution}
	\end{minipage}
	\hfill
	\begin{minipage}{0.49\linewidth}
		\centering
		\includegraphics[width=\linewidth]{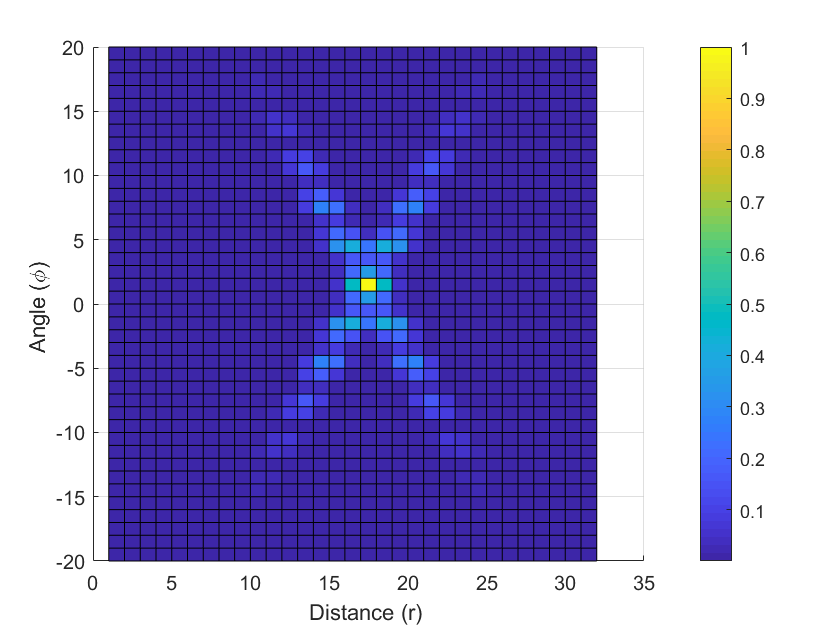}\\
		{\footnotesize (b) Proposed X-shaped DAC kernel}
	\end{minipage}
	\caption{\footnotesize Near-field UCA codebook energy distribution with X-shaped main lobes and the proposed DAC kernel.}
	\label{fig:XKernel_compare}
\end{figure}
The DAC kernel is defined as
\begin{equation}
	K_{\mathrm{DAC}}
	\!\left(
	\boldsymbol{\chi}^{(i)},
	\boldsymbol{\chi}^{(i')}
	\right)
	=
	K_X
	\!\left(
	\boldsymbol{\chi}^{(i)},
	\boldsymbol{\chi}^{(i')}
	\right)
	K_S
	\!\left(
	\boldsymbol{\chi}^{(i)},
	\boldsymbol{\chi}^{(i')}
	\right),
	\label{eq:dac_kernel}
\end{equation}
where
\(\boldsymbol{\chi}^{(i)}=(\chi_{\phi}^{(i)},\chi_r^{(i)})\)
denotes the angular--distance grid index of the \(i\)-th candidate codeword, and \(i'\) denotes another candidate codeword index in the reduced codebook. 
The functions \(K_X\) and \(K_S\) are covariance kernel functions.
Specifically, \(K_X\) captures the X-shaped angular--range correlation induced by near-field propagation, while \(K_S\) enforces local smoothness across the angular--range domain.
They are expressed as
\begin{equation}
	\begin{cases}
		K_X
		\!\left(
		\boldsymbol{\chi}^{(i)},
		\boldsymbol{\chi}^{(i')}
		\right)
		=
		\begin{aligned}[t]
			&\frac{1}{2}
			\exp\!\left(
			-\dfrac{(\Delta\chi_r-\kappa\Delta\chi_{\phi})^2}{2\sigma_x^2}
			\right)
			\\
			&\quad
			+
			\frac{1}{2}
			\exp\!\left(
			-\dfrac{(\Delta\chi_r+\kappa\Delta\chi_{\phi})^2}{2\sigma_x^2}
			\right),
		\end{aligned}
		\\[2mm]
		K_S
		\!\left(
		\boldsymbol{\chi}^{(i)},
		\boldsymbol{\chi}^{(i')}
		\right)
		=
		\exp\!\left(
		-\dfrac{(\Delta\chi_{\phi})^2}{2\sigma_{\phi}^2}
		-
		\dfrac{(\Delta\chi_r)^2}{2\sigma_r^2}
		\right),
	\end{cases}
	\label{eq:kernel_definition}
\end{equation}
where
$
	\Delta\chi_{\phi}
	=
	\chi_{\phi}^{(i)}
	-
	\chi_{\phi}^{(i')}$ and $
	\Delta\chi_r
	=
	\chi_r^{(i)}
	-
	\chi_r^{(i')}.
$
The parameters \(\kappa\), \(\sigma^2_x\), \(\sigma^2_{\phi}\), and \(\sigma^2_r\) are tunable hyperparameters of the DAC kernel, where \(\kappa\) determines the opening angle of the X-shaped correlation pattern and \(\sigma^2_x\), \(\sigma^2_{\phi}\), and \(\sigma^2_r\) control the corresponding correlation widths. Since kernel selection critically affects BAR performance~\cite{Bishop2006}, conventional kernels may be insufficient for near-field UCA MIMO scenarios.
The proposed DAC kernel better matches the near-field beam-energy distribution and thus improves beam selection performance.
Consequently, the covariance matrix is constructed as
\begin{equation}
	\mathbf{K}(i,i')
	=
	K_{\mathrm{DAC}}
	\!\left(
	\boldsymbol{\chi}^{(i)},
	\boldsymbol{\chi}^{(i')}
	\right).
	\label{eq:covariance_kernel}
\end{equation}


Let \(\Omega\) denote the index set of the probed codewords. 
The corresponding real-valued beam-energy observation vector is denoted by 
\(\mathbf{e}_{\Omega}\in\mathbb{R}^{|\Omega|}\), and can be expressed as
\begin{equation}
	\mathbf{e}_{\Omega}
	=
	\mathbf{p}^{\mathrm{Red}}(\Omega)
	+
	\mathbf{n}_{\Omega},
	\quad
	\mathbf{n}_{\Omega}
	\sim
	\mathcal{N}
	\left(
	\mathbf{0}_{|\Omega|},
	\sigma_e^2\mathbf{I}_{|\Omega|}
	\right),
\end{equation}
where \(\mathbf{n}_{\Omega}\) denotes the equivalent real-valued observation noise in the beam-energy domain.
Although the exact perturbation after energy measurement is generally non-Gaussian, it is approximated as Gaussian for tractable GP modeling, which is reasonable in large-scale antenna systems where the SNR after beamforming is relatively high.
Then, the joint distribution of \(\mathbf{p}^{\mathrm{Red}}\) and \(\mathbf{e}_{\Omega}\) is given by
\begin{equation}
	\begin{bmatrix}
		\mathbf{p}^{\mathrm{Red}} \\
		\mathbf{e}_{\Omega}
	\end{bmatrix}
	\sim
	\mathcal{N}
	\left(
	\mathbf{0},
	\begin{bmatrix}
		\mathbf{K}
		&
		\mathbf{K}(:,\Omega)
		\\
		\mathbf{K}(\Omega,:)
		&
		\mathbf{K}(\Omega,\Omega)
		+
		\sigma_e^2\mathbf{I}_{|\Omega|}
	\end{bmatrix}
	\right).
\end{equation}

For the observed codeword index set \(\Omega\), the posterior mean \(\boldsymbol{\mu}_{\Omega}\) and the marginal posterior variance of the \(i\)-th candidate are computed as~\cite{Zhuo2025,Rasmussen2006GP}
\begin{equation}
	\boldsymbol{\mu}_{\Omega}
	=
	\mathbf{K}(:,\Omega)
	\left(
	\mathbf{K}(\Omega,\Omega)
	+
	\sigma_e^2\mathbf{I}_{|\Omega|}
	\right)^{-1}
	\mathbf{e}_{\Omega},
	\label{mean_bar}
\end{equation}
\begin{equation}
	s_{\Omega,i}^{2}
	=
	\mathbf{K}(i,i)
	-
	\mathbf{K}(i,\Omega)
	\left(
	\mathbf{K}(\Omega,\Omega)
	+
	\sigma_e^2\mathbf{I}_{|\Omega|}
	\right)^{-1}
	\mathbf{K}(\Omega,i).
	\label{cov_bar}
\end{equation}

An acquisition function \(V_t(i)\) is then defined to select the next codeword index based on the posterior mean \(\boldsymbol{\mu}_{\Omega_t}\) and standard deviation \(s_{\Omega_t,i}\).
By jointly considering exploitation through the posterior mean and exploration through the posterior uncertainty, the acquisition function is expressed as
\begin{equation}
	V_t(i)
	=
	\mu_{\Omega_t,i}
	+
	\xi s_{\Omega_t,i},
	\label{Vx}
\end{equation}
where $\mu_{\Omega_t,i}$ and $s_{\Omega_t,i}$ denote the posterior mean and standard deviation of the $i$-th candidate codeword at iteration $t$, respectively, and $\xi$ controls the exploitation--exploration tradeoff.
Let \(\mathcal{I}_{\mathrm{red}}=\{1,2,\ldots,N_{\mathrm{red}}\}\) denote the index set of all candidate codewords in the reduced codebook \(\mathbf{W}^{\mathrm{Red}}\).
The next codeword index to probe is selected by maximizing the acquisition function, i.e.,
\begin{equation}
	i_{t+1}
	=
	\arg\max_{i\in\mathcal{I}_{\mathrm{red}}\setminus\Omega_t}
	V_t(i),
	\label{Strategy_i}
\end{equation}
where \(\setminus\) denotes the set difference.
The observation index set is then updated iteratively as
\begin{equation}
	\Omega_{t+1}
	=
	\Omega_t
	\cup
	\{i_{t+1}\}.
\end{equation}

Because Stage~1 provides an angular prior but no reliable range estimate, Stage~2 uses the codeword closest to \(\hat\phi\) on the outermost radial layer as a deterministic initial probe; the subsequent BAR updates explore both angle and range.
The detailed procedure of the proposed HDA-BAR beam training scheme is summarized in \textbf{Algorithm~\ref{alg3}}.

\vspace{0.2em}
\subsection{Complexity and Overhead Analysis}
\vspace{0.15em}

This subsection evaluates the online computational complexity and training overhead of the proposed HDA-BAR beam training scheme.

In {\bf Algorithm~\ref{alg3}}, \(T_{\max}\) denotes the number of Stage~2 pilots, including the initial probe.
Using incremental Cholesky updates, the online computational complexity is
\begin{equation}
\begin{aligned}
\mathcal{O}\bigl(&L+(2Q+1)|\mathcal E|+T_{\max}^{3}\\
&+N_{\mathrm{red}}T_{\max}^{2}
+N_{\mathrm{red}}T_{\max}\bigr).
\end{aligned}
\end{equation}
Here, \(\mathcal{O}(L+(2Q+1)|\mathcal E|)\) denotes the Stage~1 complexity, including DRBF beam sweeping and the energy-template correlation over \(\mathcal E\).
In Stage~2, a rank-one Cholesky update and the evaluation of only the required marginal posterior variances cost \(\mathcal{O}(t^2+N_{\mathrm{red}}t)\) at iteration \(t\), yielding \(\mathcal{O}(T_{\max}^3+N_{\mathrm{red}}T_{\max}^2)\) over the search, while acquisition maximization contributes \(\mathcal{O}(N_{\mathrm{red}}T_{\max})\).
The angularly indexed codebook makes candidate pruning linear in \(N_{\mathrm{red}}\) and hence dominated by the posterior updates. The codebooks, templates, and DAC-kernel entries, whose one-time kernel construction costs \(\mathcal{O}(N_{\mathrm{red}}^2)\), are precomputed offline and excluded from the online complexity.

In addition, beam training overhead is measured by the number of required time slots. The proposed scheme requires $L+T_{\max}$ slots, substantially fewer than those required by conventional search-based schemes, as shown in Table~\ref{tab:overhead_comparison}.


\begin{algorithm}[h]
	\caption{Two-Stage HDA-BAR Beam Training}
	\label{alg3}
	\begin{algorithmic}[1]
		\State \textbf{Input:} Coarse DRBF codebook \(\mathbf{W}^{\mathrm{DRBF}}\); energy-template set \(\mathcal T_{\mathrm{EC}}\); neighbor radius \(Q\); offset grid \(\mathcal E\); near-field FP codebook \(\mathbf{W}^{\mathrm{FP}}\); DAC-kernel covariance matrix \(\mathbf K\); maximum number of pilots \(T_{\max}\).
		
		\State \textbf{Stage 1: Energy-Correlation-Based Coarse Azimuth Refinement}
		\For{\(\eta=0\) to \(|\mathbf{W}^{\mathrm{DRBF}}|-1\)}
			\State Transmit a pilot using the DRBF codeword \(\mathbf{w}^{\mathrm{DRBF}}_{\eta}\).
		\State Compute the received signal energy $z_{\eta}$ for the $\eta$-th codeword.
		\EndFor
		
		\State Find the maximum-energy index
		\[
		{\eta^\star
		=
		\arg\max_{0\leq\eta\leq|\mathbf{W}^{\mathrm{DRBF}}|-1}
		z_{\eta}.}
		\]
		\State Obtain the corresponding azimuth angle \(\phi_{\max}=\phi_{\eta^\star}\).
		\State Form \(\mathbf z_{\mathrm{loc}}\) using~\eqref{eq:local_energy_vector}, evaluate \(\mathcal C(\epsilon)\) using~\eqref{eq:energy_template_correlation}, and obtain \(\hat\phi\) from~\eqref{eq:energy_correlation_angle_estimate}.
		
		\State \textbf{Stage 2: BAR-Based Fine-Grained Beam Search}
		\State Prune \(\mathbf{W}^{\mathrm{FP}}\) according to~\eqref{W_red} using \(\hat{\phi}\), and obtain the reduced codebook \(\mathbf{W}^{\mathrm{Red}}\).
		\State Select the codeword on the outermost radial layer whose azimuth is closest to \(\hat\phi\) as the initial codeword \(\mathbf w^{\mathrm{Red}}_{i_1}\).
		\State Probe \(\mathbf w^{\mathrm{Red}}_{i_1}\), obtain the beam-energy observation \(e_{i_1}\), and initialize \(\Omega_1=\{i_1\}\) and \(\mathbf e_{\Omega_1}=[e_{i_1}]\).

		\For{\(t=1\) to \(T_{\max}-1\)}
		\State Update the posterior means and marginal standard deviations via an incremental Cholesky factorization of~\eqref{mean_bar}--\eqref{cov_bar}.
		\State Select the next codeword index \(i_{t+1}\) according to~\eqref{Vx} and~\eqref{Strategy_i}.

		\State Probe \(\mathbf w^{\mathrm{Red}}_{i_{t+1}}\) and obtain the beam-energy observation \(e_{i_{t+1}}\).
		\State Update \(\Omega_{t+1}\leftarrow\Omega_t\cup\{i_{t+1}\}\) and append \(e_{i_{t+1}}\) to \(\mathbf{e}_{\Omega_{t+1}}\).
		\EndFor
		\State Update the final posterior mean \(\boldsymbol{\mu}_{\Omega_{T_{\max}}}\) with the last observation using the same factorization.

		\State Select the optimal codeword index as
		\[
		i^\star
		=
		\arg\max_{i\in\mathcal{I}_{\mathrm{red}}}
		\mu_{\Omega_{T_{\max}},i}.
		\]
		\State \textbf{Output:} Optimal beamforming codeword \(\mathbf{w}^{\ast}=\mathbf{w}^{\mathrm{Red}}_{i^\star}\).
	\end{algorithmic}
\end{algorithm}

\section{Simulation Results}

This section validates the derived closed-form boundaries of the minimum resolvable region for the UCA configuration through simulations and compares them with those of the ULA.
Numerical results are provided to assess the effectiveness of the hierarchical beam codebook and the HDA-BAR beam training scheme.

We set the number of antennas at the BS as $N=1024$, the carrier frequency as $f=40$ GHz, and the array radius as $R_t=0.96$ m. 
For Layer-1 codebook generation, we prescribe the power-of-two size $L=256$ according to the hierarchical structure and training-overhead budget and set $\epsilon_r=1$~dB. We implement~\eqref{eq:regularized_sector_maxmin} as a two-dimensional exact-manifold grid search: for each candidate $M$, $\zeta_{\mathrm{rel}}$ is optimized first, after which the feasible operating points are compared by their worst absolute amplitude gains over $\mathcal S_L$. The search yields $M^\star=96$ and $\zeta_{\mathrm{rel}}^\star=39.81$ within the tested grid. This operating point retains $193$ modes, satisfies $2M^\star<N$, and achieves a worst absolute amplitude gain of $-13.71$~dB over the assigned angle--distance sector. The corresponding modal-to-aperture phase-sensitivity ratio $(M^\star/(k_0R_t))^2$ is $1.425\times10^{-2}$. Since the small-modal-phase approximation is not tight at the shortest distance, it is used only to explain the sensitivity scaling, whereas the design and coverage evaluation use the exact spherical-wave manifold. We set $\tau=0.75$. For the Stage-1 energy-correlation refinement, we use $Q=4$, a 201-point fractional-offset grid over $[-\pi/L,\pi/L]$, and 41 offline distance samples over $[5,400]$~m.
For Layer~2, Algorithm~\ref{alg:fp_codebook} gives $L_1=3574$ and $L_2=32$ after including the service boundary $r_{\max}=400$~m, yielding $N_c=L_1L_2=114368$ unit-norm FP codewords. With $L=256$, the pruning half-width is $\Delta_\phi=1.40625^\circ$, corresponding to a full angular span of $2.8125^\circ$ and hence $N_{\mathrm{red}}\approx896$.
For simplicity, a single-user scenario is considered, i.e., $K=1$.
In each simulation trial, the user location is randomly generated in the near-field region, where $\phi \in [0,2\pi]$ and $r \in [5,400]$ m. 
For the HDA-BAR beam training scheme, the maximum number of BAR iterations is set to $T_{\max}=128$, the DAC-kernel hyperparameters are set to $\kappa=0.46$, $\sigma_x^2=0.69$, $\sigma_r^2=9$, and $\sigma_\phi^2=2.25$, the equivalent energy-observation noise variance is $\sigma_e^2=0.02$, and the acquisition parameter is set to $\xi=0.92$.
The number of independent Monte Carlo trials is set to $5000$.

\captionsetup[table]{
	labelfont={sc,small},      
	textfont={sc,footnotesize}, 
	labelsep=newline           
}


\begin{table}[!t]
	  \centering%
	\caption{Comparison Of Beam Training Overhead }
	\begin{tabular}{|c|c|c|}
		\hline
		Schemes & Overhead & Value \\
		\hline
		Far-field exhaustive searching scheme & $L_1$ & 3574 \\
		\hline
		Near-field exhaustive searching scheme & $L_1L_2$ & 114368 \\
		\hline
		Near-field TPBT scheme & $L_1$ + $L_2$ & 3606 \\
			\hline
		Proposed HDA-BAR beam training scheme & $L+T_{\max}$ & 384 \\
		\hline
	\end{tabular}
		\label{tab:overhead_comparison}
\end{table}


\begin{figure}[t]
	\centering
	\subfloat[Minimum resolvable distance versus BS-user distance with the antenna aperture fixed at \(1.92\,\mathrm{m}\). ]{
		\includegraphics[width=0.8\linewidth]{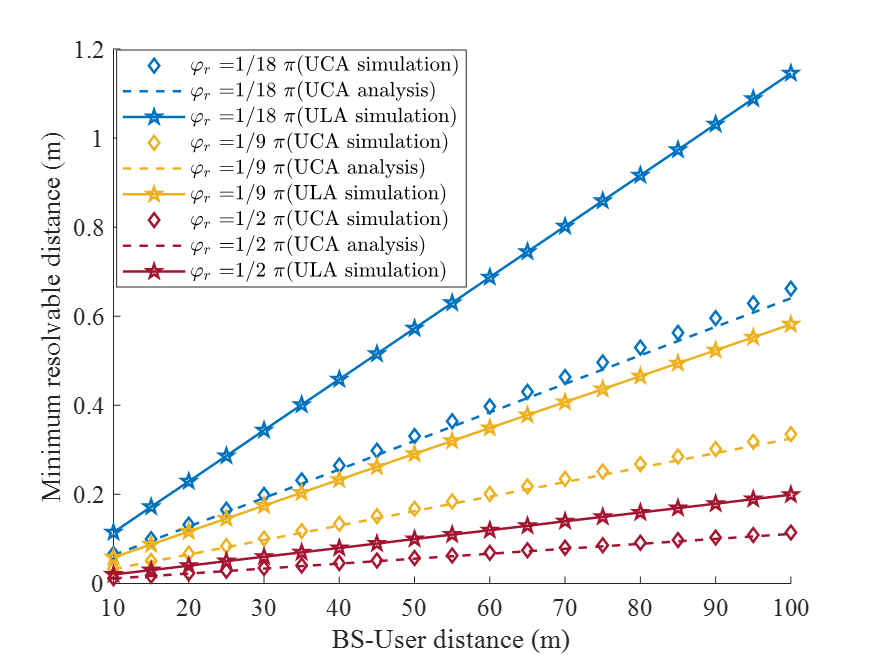}
		
	}\\
	\subfloat[Minimum resolvable distance versus antenna aperture with the BS-user distance fixed at \(15\,\mathrm{m}\).]{
		\includegraphics[width=0.8\linewidth]{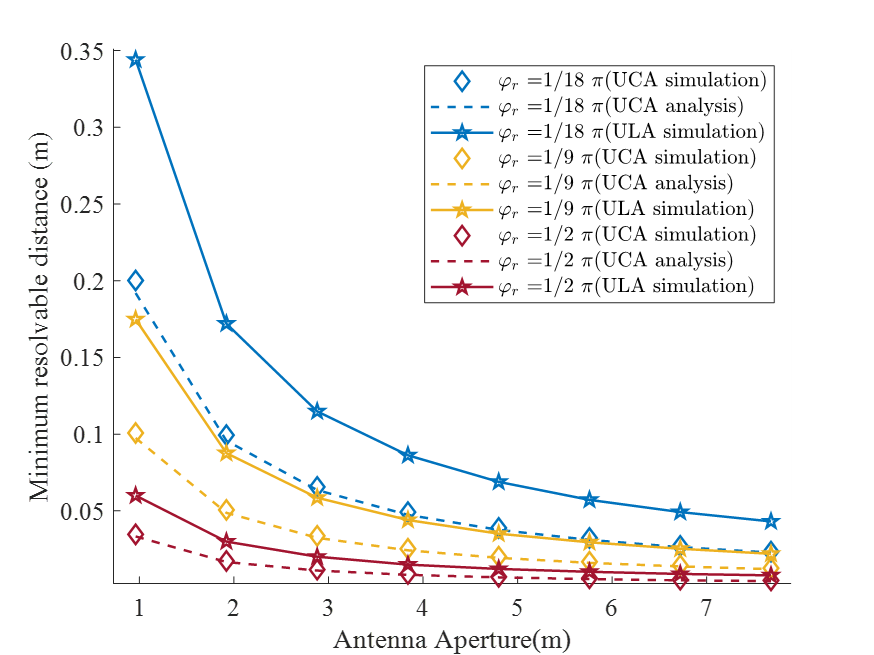}
		
	}
	\caption{\footnotesize Verification of the analytical minimum resolvable distance. }
	\label{fig:mrdvsda_UCAULA}
\end{figure}

To ensure a comparable array scale, the UCA and ULA are configured with the same number of elements and physical aperture, although their different geometries result in different element spacings.
Figs.~\ref{fig:mrdvsda_UCAULA}(a) and (b) show the minimum resolvable distance versus BS--user distance and antenna aperture, respectively, for different relative azimuths \(\varphi_r\). Here, \(\varphi_r\) denotes the angle between the displacement from the reference user to the neighboring user and the outward radial direction from the array center to the reference user. For both arrays, \(d_{\mathrm{ms}}\) increases with BS--user distance and decreases with aperture. Over the considered angles, it is smaller for \(\varphi_r\) closer to \(\pi/2\), reflecting better tangential than radial spatial resolution.
Under these matched-element and matched-aperture configurations, the UCA consistently achieves a smaller minimum resolvable distance than the ULA, indicating superior spatial resolution in near-field communications. This also implies stronger inherent interference suppression and higher spatial multiplexing capability for the UCA.
Moreover, the UCA simulation results, shown by diamond markers, closely match the closed-form predictions in~\eqref{min_d}, shown by dashed curves, with only negligible discrepancies. The solid curves with star markers report the ULA simulation results. This agreement validates the accuracy of~\eqref{min_d}.

\begin{figure}[t]
	\centering
	\subfloat[Distance-domain amplitude gain.]{
		\includegraphics[width=0.77\linewidth]{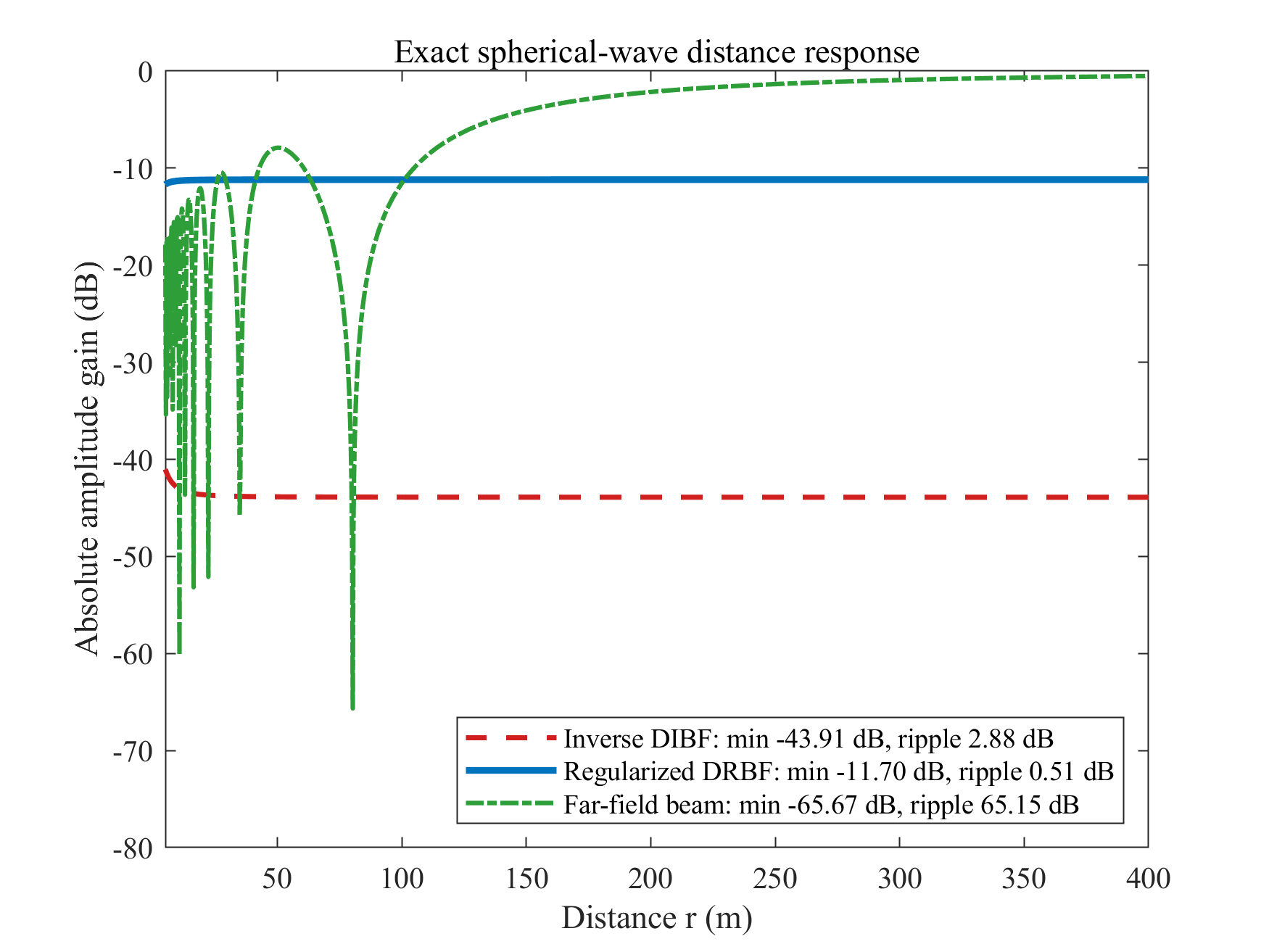}
		\label{fig:DRBF_distance}}\\[-0.5ex]
	\subfloat[Angular-domain amplitude gain.]{
		\includegraphics[width=0.77\linewidth]{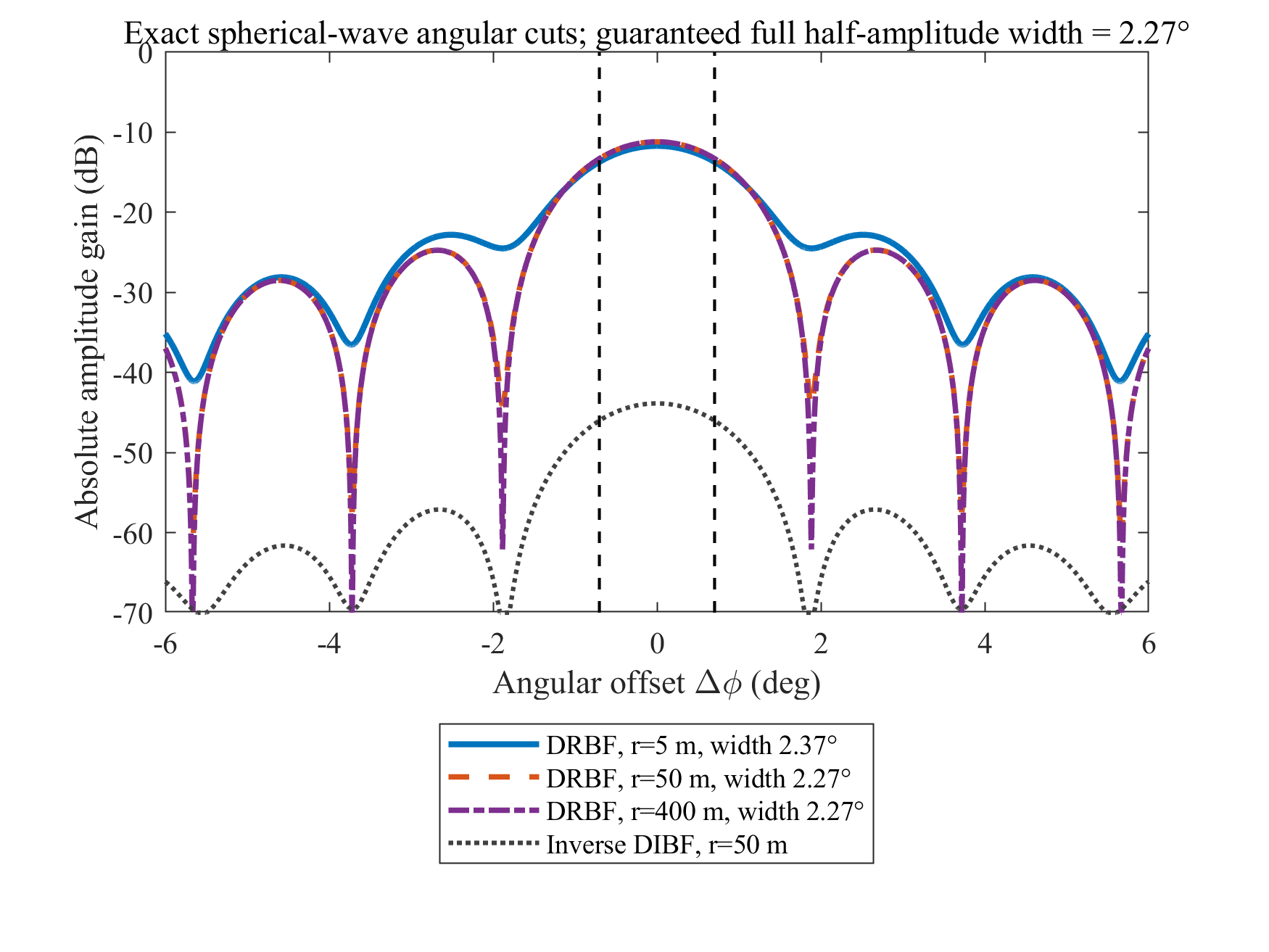}
		\label{fig:DRBF_angle}}
\caption{\footnotesize Exact-spherical absolute amplitude gains for unit-norm beams at the selected modal order $M^\star=96$: (a) distance responses of the strict inverse-modal baseline, regularized DRBF, and far-field beam; and (b) angular DRBF responses at $r=5$, $50$, and $400$~m, with the inverse-modal reference at $r=50$~m. The vertical dashed lines mark the $L=256$ sector boundaries, and no curve is pointwise normalized.}
	\label{fig:DRBF}
\end{figure}

Fig.~\ref{fig:DRBF} evaluates the selected operating point $(M^\star,\zeta_{\mathrm{rel}}^\star)=(96,39.81)$ using the exact spherical-wave manifold. In Fig.~\ref{fig:DRBF_distance}, the strict inverse-modal baseline and the proposed regularized DRBF use the same selected order $M^\star=96$, so their comparison isolates the effect of regularization. The strict inverse amplifies weak retained Bessel modes and can therefore suffer a substantial absolute-gain loss after unit-norm normalization. The proposed regularization suppresses this weak-mode amplification and improves the worst-case absolute amplitude gain while retaining a small distance variation. In contrast, the conventional far-field matched beam remains strongly affected by near-field range-dependent defocusing.

The angular cuts in Fig.~\ref{fig:DRBF_angle} confirm that the assigned sector boundaries remain within the DRBF main lobe over the considered service region. The minimum full half-amplitude width over the evaluated distances is $2.26^\circ$, approximately $10.43$ times as large as the $0.217^\circ$ width of the far-field matched beam; this broadening corresponds to a beamwidth-based power-spreading reference loss of $10.18$~dB. The evaluated worst-case on-axis loss exceeds this reference by only $1.51$~dB, while the $0.51$-dB range ripple indicates limited sensitivity to distance variations.

\begin{figure}[t]
	\centering
	\includegraphics[width=0.68\linewidth]{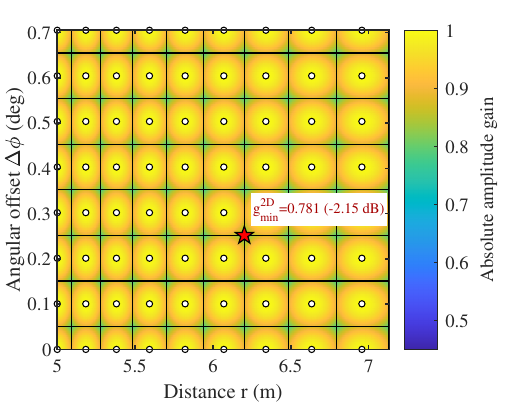}
	\caption{\footnotesize Exact-spherical 2D amplitude-gain coverage of the Layer-2 FP codebook. Black lines delineate the angular--radial cells, white circles mark codeword centers, and the red star marks the audited minimum.}
	\label{fig:FP}
\end{figure}

Fig.~\ref{fig:FP} directly evaluates the joint coverage metric in~\eqref{eq:exact_2d_coverage_metric} for the proposed Layer-2 FP codebook using the exact spherical-wave manifold and unit-norm codewords; no pointwise normalization is applied. For readability, seven adjacent angular cells and nine near-range radial samples are displayed, whereas the numerical audit spans the complete region $\mathcal D_{\mathrm{FP}}$. The maximum absolute amplitude gain remains continuous across all angular--radial cell boundaries, and the audited minimum is $g_{\min}^{\mathrm{2D}}=0.781$ ($-2.15$~dB), which exceeds the prescribed threshold $2\tau-1=0.5$ ($-6.02$~dB). This verifies joint 2D coverage without uncovered angle--distance gaps.

\begin{figure}
	\centering
	\includegraphics[width=0.8\linewidth]{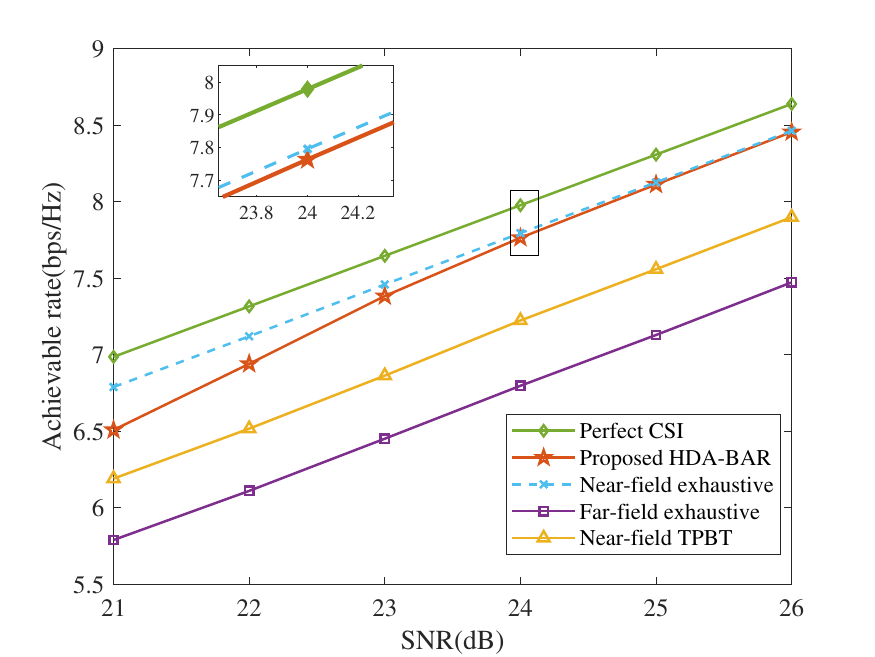}
	\caption{\footnotesize Achievable rate versus the reference SNR over $21$--$26$~dB.}
	\label{fig:achievablerates1}
\end{figure}


The achievable rate under different beam training schemes is evaluated over the reference-SNR interval $\eta_{\mathrm{ref}}\in[21,26]$~dB in Fig.~\ref{fig:achievablerates1}. 
The reference SNR is defined as
$\eta_{\mathrm{ref}}=\frac{PN|\rho|^2}{\sigma^2},$
where \(P\) is the transmit power and \(\sigma^2\) is the noise power.
Given the near-field array response vector \(\mathbf{a}(\phi,r)\) and the selected beamforming vector \(\mathbf{w}\), satisfying \(\|\mathbf{a}(\phi,r)\|_2=\|\mathbf{w}\|_2=1\), the achievable rate is given by
$R=\log_2\left(1+\eta_{\mathrm{ref}}\left|\mathbf{a}^{\mathrm H}(\phi,r)
\mathbf{w}\right|^2\right).$
The perfect-CSI scheme serves as the performance upper bound, where the beamforming vector is assumed to be perfectly aligned with the user and is given by \(\mathbf{w}=\mathbf{a}(\phi,r)\).
Under the updated $M=96$ Layer-1 configuration, the proposed two-stage search approaches the near-field exhaustive-search benchmark as the reference SNR increases and consistently outperforms near-field TPBT and far-field exhaustive search. At the lower end of the considered interval, its moderate gap to near-field exhaustive search is caused by noise-induced errors in the Stage-1 energy-correlation refinement and their propagation through candidate pruning; the gap rapidly narrows with increasing SNR. This behavior reflects the performance--overhead tradeoff of hierarchical wide-beam training rather than a loss of distance robustness.

\begin{figure}
	\centering
	\includegraphics[width=0.78\linewidth]{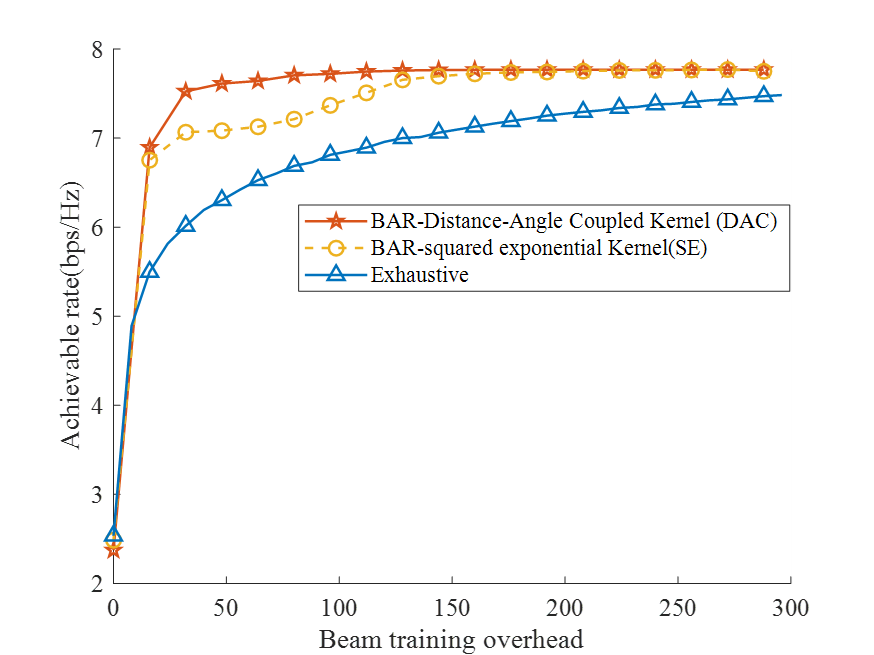}
	\caption{\footnotesize Achievable-rate comparison of the two kernels and sequential exhaustive search under limited Stage-2 training overhead.}
	\label{fig:acrvskernel}
\end{figure}

To evaluate the Stage-2 search strategy, Fig.~\ref{fig:acrvskernel} compares BAR using the proposed DAC kernel with BAR using the conventional SE kernel and sequential exhaustive search. All methods use the candidate set produced by the $M=96$ Layer-1 estimator. The exhaustive-search curve reports the best result obtained after sequentially probing the same number of codewords as the displayed training overhead, rather than the final result after scanning the entire candidate set. Within the displayed overhead range, DAC-based BAR converges faster and achieves a higher rate than SE-based BAR and sequential exhaustive search, confirming that exploiting the near-field angle--distance coupling improves the sample efficiency of Stage-2 beam training. The performance gap narrows as the overhead increases because the methods acquire increasingly informative observations.

\begin{figure}[t]
	\centering
	\subfloat[]{
		\includegraphics[width=0.78\linewidth]{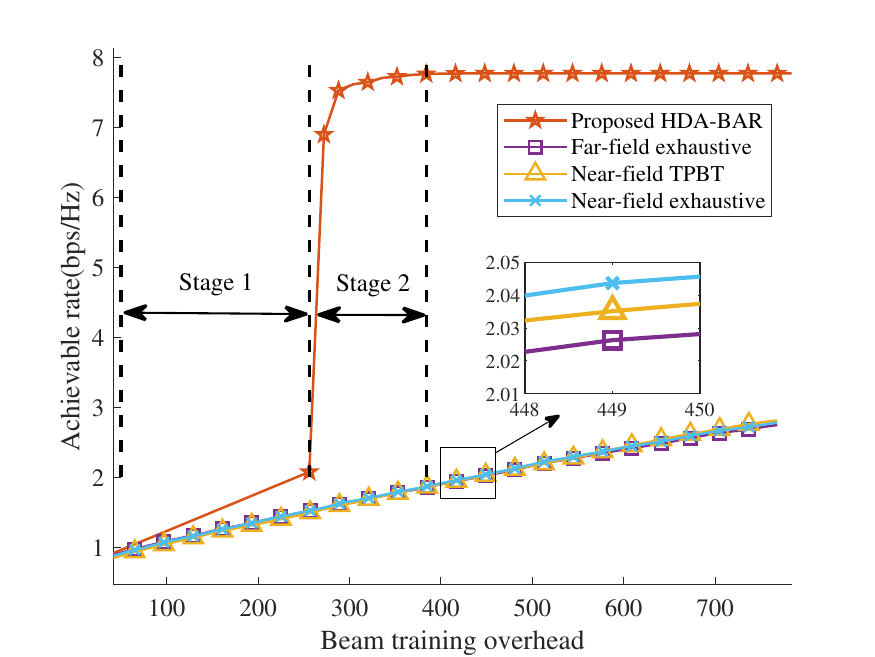}
		
	}\\
	\subfloat[]{
		\includegraphics[width=0.78\linewidth]{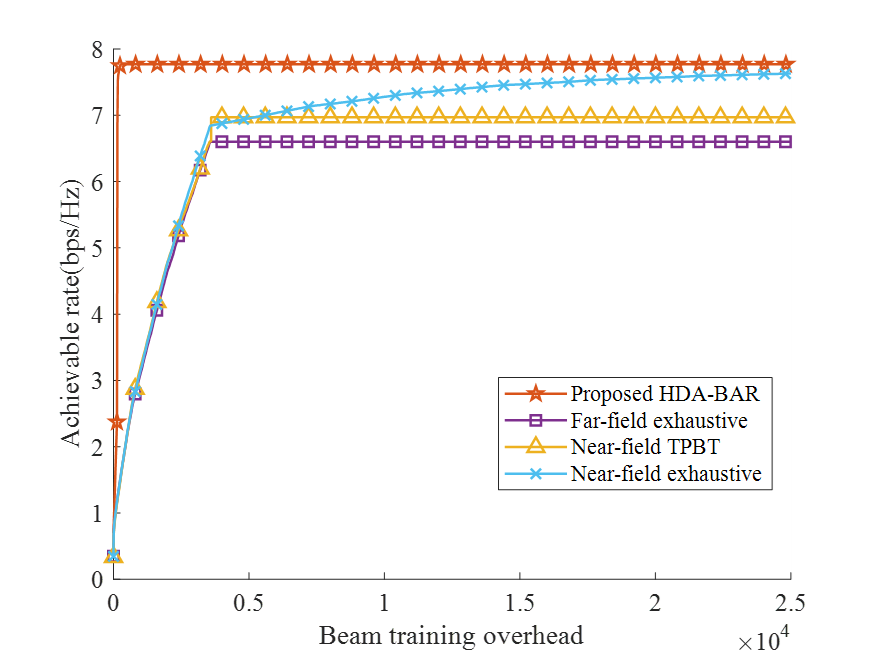}
		
	}
	\caption{\footnotesize Achievable rate performance vs. the beam training overhead.}
	\label{fig:acrvsoverhead}
\end{figure}

The achievable rate versus beam training overhead at a reference SNR of \(24\)~dB is shown in Fig.~\ref{fig:acrvsoverhead}. After $L=256$ Stage-1 probes, the proposed HDA-BAR scheme rapidly resolves the remaining angle--distance uncertainty using at most $T_{\max}=128$ additional Stage-2 probes and approaches the achievable rate of near-field exhaustive search. Consequently, HDA-BAR requires only $384$ probing slots, corresponding to a $99.66\%$ overhead reduction relative to exhaustively scanning all $N_c=114368$ FP codewords, while outperforming near-field TPBT and far-field exhaustive search. These results confirm the overall effectiveness of HDA-BAR in maintaining near-exhaustive beamforming performance with substantially reduced training overhead.

\section{Conclusion}
\label{sec:conclusion}

In this paper, we exploited the geometric properties of the UCA to investigate its spatial resolution capability in the angular and distance domains under near-field propagation.
The results showed that, with the same number of antenna elements and the same physical aperture, the UCA achieves a smaller minimum resolvable distance than the ULA, thereby providing superior near-field spatial resolution and higher spatial multiplexing potential. 
Based on this observation, a decoupled resolution-aware hierarchical codebook tailored for near-field UCA systems was proposed. 
Its Layer-1 DRBF design combines modal truncation with regularized compensation to retain distance robustness while improving the absolute amplitude gain under unit-norm transmission.
Furthermore, a low-overhead HDA-BAR beam training scheme was developed based on the proposed hierarchical codebook.
For the considered array configuration, the resulting two-stage training procedure requires $384$ probing slots and therefore reduces the training overhead by approximately 99.66\% relative to near-field exhaustive search.
As a future research direction, efficient near-field beam training for UCA systems in three-dimensional space remains a critical challenge to be further investigated.

 \section*{Appendix A}
\label{AppendixA}
Substituting the approximations of \(r_{a,n}\) in~\eqref{r1n_appromix} and \(r_{b,n}\) in~\eqref{r2n_appromix} into~\eqref{H_matrix}, the \((l,p)\)-th entry of the gain matrix \(\mathbf{G}\) can be written as
\begin{align}
	\mathbf{G}_{l,p} 
	&=
	e^{j\delta d\alpha}
	\sum_{n=1}^{N}
	e^{j\delta d\beta_1\sin\psi_n}
	\notag\\
	&\quad\times
	e^{j\delta d\beta_2\cos(2\psi_n)}
	e^{j\delta d\beta_3\sin(2\psi_n)} ,
	\label{Glp_initial}
\end{align}
where $\delta = l-p$, $\alpha =-\frac{2\pi}{\lambda}(1-\frac{R^2_t}{4r^2})\cos\varphi_r$, $\beta_1 =\frac{2\pi R_t}{r \lambda}\sin\varphi_r $, $\beta_2 =\frac{\pi R^2_t}{2r^2 \lambda }\cos\varphi_r $, $\beta_3
=\frac{\pi R_t^2}{r^2\lambda}\sin\varphi_r.$
To further simplify the above expression, we employ the Jacobi--Anger expansion~\cite{bowman2012bessel}, which can be expressed as follows
\begin{equation}
	\label{Jacobi–Anger}
	e^{j\beta\cos\vartheta}
	=
	\sum_{m=-\infty}^{\infty}
	j^{m} J_{m}(\beta)e^{jm\vartheta},
\end{equation}
where $J_m(\cdot)$ denotes the $m$-th order Bessel function of the first kind. 
Applying the Jacobi--Anger expansion, 
$\mathbf{G}_{l,k}$ can be further expressed as
Applying the Jacobi--Anger expansion, \(\mathbf{G}_{l,p}\) can be further expressed as
\begin{align}
	\label{Glp_corollary}
	\begin{split}
		\mathbf{G}_{l,p}
		=&\,
		e^{j\delta d\alpha}
		\sum_{m'=-\infty}^{\infty}
		J_{m'}\!\left(\delta d\beta_1\right)
		\sum_{m=-\infty}^{\infty}
		j^m J_m\!\left(\delta d\beta_2\right)
		\\
		&\times
		\sum_{\nu=-\infty}^{\infty}
		J_{\nu}\!\left(\delta d\beta_3\right)
		\sum_{n=1}^{N}
		e^{j(m'+2m+2\nu)\psi_n}
		\\
		&\overset{(b)}{\approx}
		N e^{j\delta d\alpha}
		\sum_{m=-\infty}^{\infty}
		\sum_{\nu=-\infty}^{\infty}
		j^m
		J_{2m+2\nu}\!\left(\delta d\beta_1\right)
		\\
		&\qquad \times
		J_m\!\left(\delta d\beta_2\right)
		J_{\nu}\!\left(\delta d\beta_3\right)
		\\
		&\overset{(c)}{\approx}
		N e^{j\delta d\alpha}
		J_0\!\left(\delta d\beta_1\right)
		J_0\!\left(\delta d\beta_2\right)
		J_0\!\left(\delta d\beta_3\right),
	\end{split}
\end{align}
where step \((b)\) follows from the discrete orthogonality property of the UCA phase modes, i.e.,
\begin{equation}
	\sum_{n=1}^{N}
	e^{j(m'+2m+2\nu)\psi_n}
	=
	\begin{cases}
		N, & m'+2m+2\nu=Nq,\ q\in\mathbb{Z},\\
		0, & m'+2m+2\nu\neq Nq,\ q\in\mathbb{Z}.
	\end{cases}
\end{equation}
This property indicates that the summation equals \(N\) only when \(m'+2m+2\nu\) is an integer multiple of the number of antennas \(N\). 
Otherwise, the summation becomes zero and thus does not contribute to the result. 
For a sufficiently large UCA and the dominant low-order phase modes, the spatial aliasing terms corresponding to \(q\neq0\) are negligible. 
Thus, the dominant contribution is obtained by taking \(q=0\), which gives $m'=-2m-2\nu$, and we have $J_{-2m-2\nu}(x)=J_{2m+2\nu}(x)$, which leads to step \((b)\).
The approximation in step \((c)\) follows from the small-argument approximation of the Bessel function. 
When \(x\rightarrow0\), the Bessel function of the first kind can be approximated as
\begin{equation}
	J_n(x)\approx\frac{1}{n!}\left(\frac{x}{2}\right)^n,\quad n\ge0 .
\end{equation}
Under the assumptions of the adopted system model, the quantities \(\delta d\beta_1\), \(\delta d\beta_2\), and \(\delta d\beta_3\) are sufficiently small. 
Consequently, the product term satisfies
\begin{equation}
	J_{2m+2\nu}(\delta d\beta_1)
	J_m(\delta d\beta_2)
	J_{\nu}(\delta d\beta_3)
	\approx 0,
	\quad (m,\nu)\neq(0,0).
\end{equation}
Retaining only the zeroth-order Bessel-function term, i.e., \(m=0\) and \(\nu=0\), leads to the approximate expression in~\eqref{Glk_appromix}. 
This completes the proof.

\section*{Appendix B}
\label{AppendixB}

We first prove Lemma~\ref{lem:drbf_range_robustness}. By rotational symmetry, set the target azimuth to zero and define
\begin{equation}
	d(\theta,r)
	=\sqrt{r^2+R_t^2-2rR_t\cos\theta}.
\end{equation}
In the dense-UCA limit, projecting the exact spherical-wave manifold onto the $m$-th phase-mode basis vector gives
\begin{equation}
	h_m(r)
	=\frac{1}{2\pi}\int_0^{2\pi}
	e^{-jk_0[d(\theta,r)-r]}e^{-jm\theta}\,\mathrm d\theta.
	\label{eq:appendix_exact_modal_integral}
\end{equation}
Its far-field limit is
\begin{equation}
	h_m(\infty)
	=\frac{1}{2\pi}\int_0^{2\pi}
	e^{jk_0R_t\cos\theta}e^{-jm\theta}\,\mathrm d\theta
	=j^mJ_m(k_0R_t).
\end{equation}
We now compare~\eqref{eq:appendix_exact_modal_integral} with this limit using the stationary-phase approximation. Around $\theta=0$,
\begin{equation}
	d(\theta,r)
	=r-R_t+\frac{rR_t}{2(r-R_t)}\theta^2
	+\mathcal O(\theta^4).
\end{equation}
Let $x=k_0R_t$ and $A_-=xr/(r-R_t)$. The phase of the integrand is locally
\begin{align}
	\Phi_m(\theta,r)
	&=-k_0[d(\theta,r)-r]-m\theta\notag\\
	&\approx x+\frac{m^2}{2A_-}
	-\frac{A_-}{2}\left(\theta+\frac{m}{A_-}\right)^2.
\end{align}
The corresponding far-field stationary phase contains $m^2/(2x)$; hence, the finite-range phase increment contributed by this stationary region is
\begin{equation}
	\Delta\Phi_{m,-}(r)
	=\frac{m^2}{2A_-}-\frac{m^2}{2x}
	=-\frac{m^2}{2k_0r}.
	\label{eq:appendix_stationary_phase_minus}
\end{equation}
Around the second stationary region, let $\theta=\pi+u$. Since
\begin{equation}
	d(\pi+u,r)
	=r+R_t-\frac{rR_t}{2(r+R_t)}u^2
	+\mathcal O(u^4),
\end{equation}
and $A_+=xr/(r+R_t)$, completing the square gives
\begin{equation}
	\Phi_m(\pi+u,r)
	\approx-x-m\pi-\frac{m^2}{2A_+}
	+\frac{A_+}{2}\left(u-\frac{m}{A_+}\right)^2.
\end{equation}
Relative to its far-field value $-m^2/(2x)$, this region produces the same increment
\begin{equation}
	\Delta\Phi_{m,+}(r)
	=-\frac{m^2}{2A_+}+\frac{m^2}{2x}
	=-\frac{m^2}{2k_0r}.
	\label{eq:appendix_stationary_phase_plus}
\end{equation}
Because both dominant stationary regions acquire the same range-dependent phase, the finite-to-far-field modal ratio, after removal of a mode-independent phase, is
\begin{equation}
	\frac{h_m(r)}{h_m(\infty)}
	=e^{-j\widetilde\delta_m(r)}
	\left[1+\varepsilon_m(r,R_t)\right],
	\qquad
	\widetilde\delta_m(r)\approx\frac{m^2}{2k_0r}.
	\label{eq:appendix_modal_ratio}
\end{equation}
Here, $\varepsilon_m$ collects stationary-phase amplitude variation, higher-order path terms, and finite-array modal aliasing. Equation~\eqref{eq:appendix_modal_ratio} applies to retained modes away from zeros of $J_m(k_0R_t)$; weak modes are subsequently controlled by regularization.

Ignoring $\varepsilon_m$ in the leading-order analysis, strict inverse-modal equalization gives the normalized on-axis response
\begin{equation}
	A_M(r)
	=\frac{1}{2M+1}
	\sum_{m=-M}^{M}e^{-j\widetilde\delta_m(r)}.
\end{equation}
Its squared amplitude satisfies
\begin{align}
	|A_M(r)|^2
	&=\frac{1}{(2M+1)^2}
	\sum_{m=-M}^{M}\sum_{\ell=-M}^{M}
	\cos\!\left(\widetilde\delta_m-\widetilde\delta_\ell\right)\notag\\
	&\approx
	1-\operatorname{Var}(\widetilde\delta_m)
	=1-\frac{\operatorname{Var}(m^2)}{4(k_0r)^2},
\end{align}
where $\cos z=1-z^2/2+\mathcal O(z^4)$ has been used. Therefore,
\begin{equation}
	|A_M(r)|
	\approx1-\frac{\operatorname{Var}(m^2)}{8(k_0r)^2}.
\end{equation}
For uniformly weighted modes $m=-M,\ldots,M$,
\begin{align}
	\mathbb E[m^2]
	&=\frac{M(M+1)}{3},
\end{align}
\begin{equation}
	\operatorname{Var}(m^2)
	=\frac{M(M+1)[4M(M+1)-3]}{45}.
\end{equation}
Substitution gives~\eqref{eq:inverse_modal_range_loss}. Finally, comparing $\widetilde\delta_M(r)$ with the maximum curvature phase $k_0R_t^2/(2r)$ in~\eqref{eq:drbf_range_residual} gives the factor $(M/(k_0R_t))^2$, completing the proof of Lemma~\ref{lem:drbf_range_robustness}.

%

\bibliographystyle{IEEEtran}  
\bibliography{bib1_modified_abbrev.bib}           

@ARTICLE{Cui2022ULA,
	author={Cui, Mingyao and Dai, Linglong},
	journal={IEEE Trans. Commun.}, 
	title={Channel Estimation for Extremely Large-Scale {MIMO}: Far-Field or Near-Field?}, 
	year={2022},
	volume={70},
	number={4},
	pages={2663--2677},
	doi={10.1109/TCOMM.2022.3146400}}

@ARTICLE{Song2024,
	author={Song, Ruihao and Shen, Jun and Yuan, Hang and Gao, Xiaozheng and Ding, Xuhui and Liu, Yuanwei and da Costa, Daniel Benevides},
	journal={IEEE Internet Things J.}, 
	title={Line-of-Sight {MIMO} Systems: {DoF} Analysis and Hybrid Beamforming Design}, 
	year={2024},
	volume={11},
	number={19},
	pages={31791--31804},
	doi={10.1109/JIOT.2024.3423802}}

@ARTICLE{Liu2023,
	author={Xie, Ziyi and Liu, Yuanwei and Xu, Jiaqi and Wu, Xuanli and Nallanathan, Arumugam},
	journal={IEEE Wireless Commun. Lett.}, 
	title={Performance Analysis for Near-Field {MIMO}: Discrete and Continuous Aperture Antennas}, 
	year={2023},
	volume={12},
	number={12},
	pages={2258--2262},
	doi={10.1109/LWC.2023.3317492}}

@ARTICLE{Yuan2023,
	author={Yuan, Zhiqiang and Zhang, Fengchun and Zhang, Yuxiang and Zhang, Jianhua and Pedersen, Gert Frølund and Fan, Wei},
	journal={IEEE Trans. Antennas Propag.}, 
	title={On Phase Mode Selection in the Frequency-Invariant Beamformer for Near-Field {mmWave} Channel Characterization}, 
	year={2023},
	volume={71},
	number={11},
	pages={8975--8986},
	doi={10.1109/TAP.2023.3316791}}

@INPROCEEDINGS{Ly2024,
	author={Nguyen, Ly V. and Nguyen, Duy H. N. and Atzeni, Italo and Tölli, Antti and Swindlehurst, A. Lee},
	booktitle={Proc. IEEE Sensor Array Multichannel Signal Process. Workshop (SAM)}, 
	title={Channel Estimation in Low-Resolution Near-Field Massive {MIMO} Systems}, 
	year={2024},
	volume={},
	number={},
	pages={1--5},
	doi={10.1109/SAM60225.2024.10636448}}

@ARTICLE{Ray2021,
	author={Ray, Partha Pratim and Kumar, Neeraj and Guizani, Mohsen},
	journal={IEEE Wireless Commun.}, 
	title={A Vision on {6G}-Enabled {NIB}: Requirements, Technologies, Deployments, and Prospects}, 
	year={2021},
	volume={28},
	number={4},
	pages={120--127},
	doi={10.1109/MWC.001.2000384}}

@ARTICLE{Dong2025,
	author  = {Dong, Lijun and Hu, Nan and Deng, Wei and Xu, Xiaodong and Ding, Haiyu and Huang, Yuhong},
	journal = {IEEE Wireless Commun.},
	title   = {Quality of Service Evolution and Enhancement for Holographic Video Communications Toward {6G}},
	year    = {2025},
	volume  = {32},
	number  = {2},
	pages   = {140--146},
	doi     = {10.1109/MWC.001.2400202}
}

@ARTICLE{Zhang2024,
	author={Zhang, Ronghui and He, Yuan and Yuan, Lufeng and Li, Ziyue and He, Chendi and Tan, Fangqing},
	journal={IEEE Wireless Commun.}, 
	title={Empowering {6G} Ambient Intelligence with Terahertz Integrated Sensing and Communications}, 
	year={2024},
	volume={31},
	number={5},
	pages={256--263},
	doi={10.1109/MWC.017.2300532}}

@ARTICLE{Wang2023,
	author={Wang, Cheng-Xiang and You, Xiaohu and Gao, Xiqi and Zhu, Xiuming and Li, Zixin and Zhang, Chuan and Wang, Haiming and Huang, Yongming and Chen, Yunfei and Haas, Harald and Thompson, John S. and Larsson, Erik G. and Renzo, Marco Di and Tong, Wen and Zhu, Peiying and Shen, Xuemin and Poor, H. Vincent and Hanzo, Lajos},
	journal={IEEE Commun. Surveys Tuts.}, 
	title={On the Road to {6G}: Visions, Requirements, Key Technologies, and Testbeds}, 
	year={2023},
	volume={25},
	number={2},
	pages={905--974},
	doi={10.1109/COMST.2023.3249835}}

@ARTICLE{Mingyao2023,
	author={Cui, Mingyao and Wu, Zidong and Lu, Yu and Wei, Xiuhong and Dai, Linglong},
	journal={IEEE Commun. Mag.}, 
	title={Near-Field {MIMO} Communications for {6G}: Fundamentals, Challenges, Potentials, and Future Directions}, 
	year={2023},
	volume={61},
	number={1},
	pages={40--46},
	doi={10.1109/MCOM.004.2200136}}

@ARTICLE{Wei2022,
	author={Wei, Xiuhong and Dai, Linglong},
	journal={IEEE Commun. Lett.}, 
	title={Channel Estimation for Extremely Large-Scale Massive {MIMO}: Far-Field, Near-Field, or Hybrid-Field?}, 
	year={2022},
	volume={26},
	number={1},
	pages={177--181},
	doi={10.1109/LCOMM.2021.3124927}}

@ARTICLE{Zhang2022,
	author={Zhang, Haiyang and Shlezinger, Nir and Guidi, Francesco and Dardari, Davide and Imani, Mohammadreza F. and Eldar, Yonina C.},
	journal={IEEE Trans. Wireless Commun.}, 
	title={Beam Focusing for Near-Field Multiuser {MIMO} Communications}, 
	year={2022},
	volume={21},
	number={9},
	pages={7476--7490},
	doi={10.1109/TWC.2022.3158894}}

@ARTICLE{Lu2023,
	author={Lu, Yu and Dai, Linglong},
	journal={IEEE Trans. Commun.}, 
	title={Near-Field Channel Estimation in Mixed {LoS}/{NLoS} Environments for Extremely Large-Scale {MIMO} Systems}, 
	year={2023},
	volume={71},
	number={6},
	pages={3694--3707},
	doi={10.1109/TCOMM.2023.3260242}}

@ARTICLE{Sun2025,
	author={Sun, Shu and Li, Renwang and Han, Chong and Liu, Xingchen and Xue, Liuxun and Tao, Meixia},
	journal={IEEE Commun. Mag.}, 
	title={How to Differentiate Between Near Field and Far Field: Revisiting the Rayleigh Distance}, 
	year={2025},
	volume={63},
	number={1},
	pages={22--28},
	doi={10.1109/MCOM.001.2400007}}

@ARTICLE{Lee2016,
	author={Lee, Junho and Gil, Gye-Tae and Lee, Yong H.},
	journal={IEEE Trans. Commun.}, 
	title={Channel Estimation via Orthogonal Matching Pursuit for Hybrid {MIMO} Systems in Millimeter Wave Communications}, 
	year={2016},
	volume={64},
	number={6},
	pages={2370--2386},
	doi={10.1109/TCOMM.2016.2557791}}

@ARTICLE{Hou2025,
	author={Hou, Xiaolin and Björnson, Emil and Lee, Namyoon and Heath, Robert W.},
	journal={IEEE Commun. Mag.}, 
	title={Guest Editorial: Near-Field {MIMO} Technologies Toward {6G}}, 
	year={2025},
	volume={63},
	number={1},
	pages={20--21},
	doi={10.1109/MCOM.2025.10819471}}

@ARTICLE{CodebookWei2022,
	author={Wei, Xiuhong and Dai, Linglong and Zhao, Yajun and Yu, Guanghui and Duan, Xiangyang},
	journal={China Commun.}, 
	title={Codebook design and beam training for extremely large-scale {RIS}: Far-field or near-field?}, 
	year={2022},
	volume={19},
	number={6},
	pages={193--204},
	doi={10.23919/JCC.2022.06.015}}

@ARTICLE{TrainingZhang2022,
	author={Zhang, Yunpu and Wu, Xun and You, Changsheng},
	journal={IEEE Wireless Commun. Lett.}, 
	title={Fast Near-Field Beam Training for Extremely Large-Scale Array}, 
	year={2022},
	volume={11},
	number={12},
	pages={2625--2629},
	doi={10.1109/LWC.2022.3212344}}

@ARTICLE{Lu2024,
	author={Lu, Yu and Zhang, Zijian and Dai, Linglong},
	journal={IEEE Trans. Commun.}, 
	title={Hierarchical Beam Training for Extremely Large-Scale {MIMO}: From Far-Field to Near-Field}, 
	year={2024},
	volume={72},
	number={4},
	pages={2247--2259},
	doi={10.1109/TCOMM.2023.3344600}}

@ARTICLE{Yuanwei2024,
	author={Wu, Chenyu and You, Changsheng and Liu, Yuanwei and Chen, Li and Shi, Shuo},
	journal={IEEE Trans. Veh. Technol.}, 
	title={Two-Stage Hierarchical Beam Training for Near-Field Communications}, 
	year={2024},
	volume={73},
	number={2},
	pages={2032--2044},
	doi={10.1109/TVT.2023.3311868}
	}

@ARTICLE{Shi2024,
	author={Shi, Xu and Wang, Jintao and Sun, Zhi and Song, Jian},
	journal={IEEE Trans. Wireless Commun.}, 
	title={Spatial-Chirp Codebook-Based Hierarchical Beam Training for Extremely Large-Scale Massive {MIMO}}, 
	year={2024},
	volume={23},
	number={4},
	pages={2824--2838},
	doi={10.1109/TWC.2023.3303229}
}

@ARTICLE{Qi2022,
	author={Qi, Chenhao and Chen, Kangjian and Dobre, Octavia A. and Li, Geoffrey Ye},
	journal={IEEE Trans. Wireless Commun.}, 
	title={Hierarchical Codebook-Based Multiuser Beam Training for Millimeter Wave Massive {MIMO}}, 
	year={2020},
	volume={19},
	number={12},
	pages={8142--8152},
	doi={10.1109/TWC.2020.3019523}}

@ARTICLE{Zhuo2025,
	author={Xu, Zhuo and Zhang, Zijian and Dai, Linglong},
	journal={IEEE Trans. Wireless Commun.}, 
	title={Near-Optimal Near-Field Beam Training: From Searching to Inference}, 
	year={2025},
	volume={24},
	number={11},
	pages={9173--9185},
	doi={10.1109/TWC.2025.3571488}}

@ARTICLE{Xie2024UCA,
	author={Xie, Yuxin and Ning, Boyu and Li, Lingxiang and Chen, Zhi},
	journal={IEEE Wireless Commun. Lett.}, 
	title={Near-Field Beam Training in {THz} Communications: The Merits of Uniform Circular Array}, 
	year={2023},
	volume={12},
	number={4},
	pages={575--579},
	doi={10.1109/LWC.2023.3234001}}

@ARTICLE{Guo2024UCA,
	author={Guo, Yunhui and Zhang, Yang and Wang, Zhaolin and Liu, Yuanwei},
	journal={IEEE Trans. Wireless Commun.}, 
	title={Wideband Beamforming for Near-Field Communications With Circular Arrays}, 
	year={2024},
	volume={23},
	number={12},
	pages={19065--19082},
	doi={10.1109/TWC.2024.3478368}}

@ARTICLE{Wu2024UCA,
	author={Wu, Zidong and Cui, Mingyao and Dai, Linglong},
	journal={IEEE Trans. Wireless Commun.}, 
	title={Enabling More Users to Benefit From Near-Field Communications: From Linear to Circular Array}, 
	year={2024},
	volume={23},
	number={4},
	pages={3735--3748},
	doi={10.1109/TWC.2023.3310912}}

@ARTICLE{Ding2023UCA,
	author={Ding, Zihang and Zhang, Jianhua and Yuan, Zhiqiang and Tang, Pan and You, Changsheng and Tian, Lei and Chen, Ke and Liu, Guangyi},
	journal={IEEE Trans. Veh. Technol.}, 
	title={Low-Dimensional Near-Field Codebook Design for Extremely Large-Scale {MIMO} Systems With Uniform Circular Arrays}, 
	year={2025},
	volume={74},
	number={10},
	pages={16511--16515},
	doi={10.1109/TVT.2025.3568431}}

@ARTICLE{Zhang2019FIBF,
	author={Zhang, Fengchun and Fan, Wei},
	journal={IEEE Trans. Veh. Technol.}, 
	title={Near-Field Ultra-Wideband {mmWave} Channel Characterization Using Successive Cancellation Beamspace {UCA} Algorithm}, 
	year={2019},
	volume={68},
	number={8},
	pages={7248--7259},
	doi={10.1109/TVT.2019.2926783}}

@ARTICLE{CuiDai2024,
	author={Cui, Mingyao and Dai, Linglong},
	journal={IEEE Trans. Wireless Commun.}, 
	title={Near-Field Wideband Beamforming for Extremely Large Antenna Arrays}, 
	year={2024},
	volume={23},
	number={10},
	pages={13110--13124},
	doi={10.1109/TWC.2024.3398770}}

@INPROCEEDINGS{Roy2002,
	author={Roy, Olivier and Vetterli, Martin},
	booktitle={Proc. 15th Eur. Signal Process. Conf. (EUSIPCO)}, 
	title={The effective rank: A measure of effective dimensionality}, 
	year={2007},
	volume={},
	number={},
	pages={606--610},
	doi={}}

@book{bowman2012bessel,
	title     = {Introduction to Bessel Functions},
	author    = {Bowman, Frank},
	publisher = {Courier Corporation},
	address   = {North Chelmsford, U.K.},
	year      = {2012}
}

@ARTICLE{Rappaport2019,
	author={Rappaport, Theodore S. and Xing, Yunchou and Kanhere, Ojas and Ju, Shihao and Madanayake, Arjuna and Mandal, Soumyajit and Alkhateeb, Ahmed and Trichopoulos, Georgios C.},
	journal={IEEE Access}, 
	title={Wireless Communications and Applications Above {100 GHz}: Opportunities and Challenges for {6G} and Beyond}, 
	year={2019},
	volume={7},
	number={},
	pages={78729--78757},
	doi={10.1109/ACCESS.2019.2921522}}

@book{Bishop2006,
	author    = {C. M. Bishop},
	title     = {Pattern Recognition and Machine Learning},
	publisher = {Springer},
	address   = {New York, NY, USA},
	year      = {2006}
}

@INPROCEEDINGS{Zijian2024,
	author={Zhang, Zijian and Zhu, Jieao and Dai, Linglong and Heath, Robert W.},
	booktitle={Proc. IEEE Wireless Commun. Netw. Conf. (WCNC)}, 
	title={Successive Bayesian Reconstructor for FAS Channel Estimation}, 
	year={2024},
	volume={},
	number={},
	pages={1--5},
	doi={10.1109/WCNC57260.2024.10571302}}

@book{Rasmussen2006GP,
	title={Gaussian Processes for Machine Learning},
	author={Rasmussen, Carl Edward and Williams, Christopher},
	year={2006},
	publisher={MIT Press}
}

@article{Cui2023NearFieldWideband,
	title   = {Near-Field Wideband Channel Estimation for Extremely Large-Scale {MIMO}},
	author  = {Cui, Ming and Dai, Linglong},
	journal = {Science China Information Sciences},
	volume  = {66},
	number  = {17},
	pages   = {172303},
	year    = {2023},
	doi     = {10.1007/s11432-022-3654-y}
}

@ARTICLE{Bacci2023,
	author={Bacci, Giacomo and Sanguinetti, Luca and Björnson, Emil},
	journal={IEEE Wireless Commun. Lett.}, 
	title={Spherical Wavefronts Improve {MU-MIMO} Spectral Efficiency When Using Electrically Large Arrays}, 
	year={2023},
	volume={12},
	number={7},
	pages={1219--1223},
	doi={10.1109/LWC.2023.3268087}}

@ARTICLE{Ruiz2024,
	author={Ruiz-Sicilia, Juan Carlos and Renzo, Marco Di and Mursia, Placido and Kaushik, Aryan and Sciancalepore, Vincenzo},
	journal={IEEE Trans. Green Commun. Netw.}, 
	title={Spatial Multiplexing in Near-Field Line-of-Sight {MIMO} Communications: Paraxial and Non-Paraxial Deployments}, 
	year={2025},
	volume={9},
	number={1},
	pages={338--353},
	doi={10.1109/TGCN.2024.3418842}}

@ARTICLE{Huang2025EnergyEfficient,
		author  = {Huang, Yuting and Gao, Xiaozheng and Shi, Minwei and Ye, Neng and Yang, Kai},
		journal = {IEEE Trans. Veh. Technol.},
		title   = {Energy-efficient power control in {D2D} networks: A distributed {ADMM} approach with dynamic penalty coefficient},
		year    = {2025},
		volume  = {74},
		number  = {5},
		pages   = {8238--8250},
		doi     = {10.1109/TVT.2025.3530613}
	}

@ARTICLE{Song2026NearField,
	author  = {Song, Ruihao and Yuan, Hang and Song, Chenran and Zhang, Zeyu and Gao, Xiaozheng and Niyato, Dusit and Yang, Kai},
	journal = {IEEE Trans. Commun.},
	title   = {Near-field {LoS} {MIMO} with dual continuous apertures ({CAPs}): Channel decomposition and {EDoF}-optimal beamforming},
	year    = {2026},
	volume  = {74},
	pages   = {2251--2267},
	doi     = {10.1109/TCOMM.2025.3644485}
}

@ARTICLE{Song2026NearFieldTHzCovert,
	author  = {Song, Chenran and Yuan, Hang and Zhang, Zeyu and Gao, Xiaozheng and Shi, Minwei and Yang, Nan and Yang, Kai},
	journal = {IEEE Trans. Wireless Commun.},
	title   = {Near-field terahertz covert communications with noise uncertainty},
	year    = {2026},
	volume  = {25},
	pages   = {7520--7534},
	doi     = {10.1109/TWC.2025.3631844}
}

@ARTICLE{Shen2026FullyDynamic,
	author  = {Shen, Jun and Yuan, Hang and Gao, Xiaozheng and Ye, Neng and {Benevides da Costa}, Daniel and Yang, Kai},
	journal = {Digit. Commun. Netw.},
	title   = {Fully-dynamic and hardware-efficient analog architecture for terahertz wideband hybrid beamforming with multi-channel low-resolution phase shifters},
	year    = {2026}
}

@ARTICLE{Zhao2026AIEmpowered,
	author  = {Zhao, Yue and Li, Zan and Zhao, Changyuan and Zhang, Yiteng and Niyato, Dusit},
	journal = {IEEE Wireless Commun.},
	title   = {Toward {AI}-empowered wireless position sensing: An integrated framework of perception, cognition, decision, and learning},
	year    = {2026},
	pages   = {1--9},
	doi     = {10.1109/MWC.2026.3690107}
}

\end{document}